\pgfplotsset{compat=1.14}
\newcommand{\punt}[1]{}
\newcommand{\cmnt}[1]{}
\newcommand{\ignore}[1]{}
\newcommand{\remove}[1]{}
\newcommand{\figref}[1]{Fig.~\ref{fig:#1}}
\newcommand{\algoref}[1]{{Algorithm \ref{alg:#1}}}
\newcommand{\Figref}[1]{Figure~\ref{fig:#1}}
\newcommand{\Apnref}[1]{Section~\ref{apn:#1}}
\newcommand{\sct} {SCT\xspace}
\begin{document}
\title{Efficient and Concurrent Transaction Execution Module for Blockchains}
%
%
\author{Manaswini Piduguralla \inst{1} \and
Saheli Chakraborty\inst{1} \and
Parwat Singh Anjana \inst{1} \and
Sathya Peri \inst{1} }
\authorrunning{M. Piduguralla et al.}
%
\institute{Indian Institute of Technology Hyderabad, Kandi, Telangana 502285
\email{\{cs20resch11007,ai20mtech12002,cs17resch11004,sathya\_p\}@iith.ac.com}
}
\maketitle              
\begin{abstract}
Blockchain technology is a distributed, decentralized, and immutable ledger system. It is the platform of choice for managing smart contract transactions (SCTs). Smart contracts are self-executing codes of agreement between interested parties commonly implemented using blockchains. A block contains a set of transactions representing changes to the system and a hash of the previous block. The SCTs are executed multiple times during the block production and validation phases across the network. Transaction execution is sequential in most blockchain technologies.

In this work, we propose a parallel direct acyclic graph (DAG) based parallel scheduler module for concurrent execution of SCTs. Using the module, which can be seamlessly integrated into the blockchain framework, parallel transactions can be executed more efficiently and the speed of transactions can be increased. The dependencies among a block's transactions are represented through a concurrent DAG data structure that assists with throughput optimization. We have created a DAG scheduler module that can be incorporated into blockchain platforms for concurrent execution with ease. For evaluation, our framework is implemented in Hyperledger Sawtooth V1.2.6. The performance across multiple smart contract applications is measured for various scheduler types. A performance comparison between the proposed framework and the Sawtooth parallel SCT execution framework shows significant improvements.

\keywords{Smart Contract Executions, Blockchains, Hyperledger Sawtooth}
\end{abstract}
\section{Introduction}
\label{apn:intro}
Blockchain platforms help establish and maintain a decentralized and distributed ledger system between untrusting parties \cite{Nakamoto:Bitcoin:2009}. The blockchain is a collection of immutable blocks typically in the form of a chain. Each block points to its previous block by storing its hash. A block in the blockchain consists of several \emph{smart contract transactions (\sct{s})} which are self-executing contracts of agreement between two or more parties that are written in the form of computer code. These help in the execution of agreements among untrusted parties without the necessity for a common trusted authority to oversee the execution. The development and deployment of smart contracts on blockchain platforms is growing rapidly.

A blockchain network usually consists of several nodes (ranging from thousands to millions depending on the blockchain), each of which stores the entire blockchain's contents. Any node in the blockchain can act as a \emph{block producer}. A producer node selects transactions from a pool of available transactions and packages them into a block. The proposed block is then broadcast to other nodes in the network. A node on receiving the block acts as a \emph{validator}. It validates the transactions in the block by executing them one after another. Thus a node can act as a producer while producing the block and as a validator while validating the block. 

Agreement on the proposed block by the nodes of the blockchain is performed through various consensus mechanisms, like proof of work (PoW)\cite{Nakamoto:Bitcoin:2009}, proof of stake (PoS)\cite{Vasin:2014:WP}, proof of elapsed time (PoET)\cite{KunBla+:1997:CJ}, etc. For a block to be added to the blockchain, the transactions of the block are processed in two contexts: (a) first time by the block producer when the block is produced; (b) then by all the block validators as a part of the block validation. Thus the \sct code is executed multiple times by the producer and the validators. 

The majority of blockchain technologies execute the \sct{s} in a block serially during the block creation and validation phases. This is one of the bottlenecks for higher throughput and scalability of blockchain models \cite{Dickerson+:ACSC:PODC:2017}. Throughput of the blockchain can be improved by concurrent execution of transactions. To process transactions simultaneously, it is necessary to make sure that transactions that are dependent on each other are not run together. In addition, concurrent execution of transactions at each validator should result in the same end state in the database. In this work, a framework for the concurrent execution of transactions on producers and validators is proposed. The framework has been implemented in Hyperledger Sawtooth 1.2.6 to determine the level of optimization it achieves.

The underlying architecture of Hyperledger Sawtooth and its introduction are provided in \Apnref{background}. \Apnref{framework} details our framework for parallel transaction execution as well as pseudocode for DAG creation, utilization, and smart validation. This is followed by a discussion of the implementation details, experiment design, and results presented in \Apnref{experiment}. Tests of our scheduler have been extensively conducted, and the results of our experiments have been analyzed in detail. We discuss our conclusions and next steps in \Apnref{conclusion} of our paper.
\section{Background}
\label{apn:background}
\subsection{Hyperledger Sawtooth}

\label{subsec:sawtooth}

The Hyperledger Foundation is an open-source collaboration project by the Linux Foundation to establish and encourage cross-industry blockchain technologies. Sawtooth is one of the most popular blockchain technologies being developed for permission and permissionless networks. It is designed such that transaction rules, permissions, and consensus algorithms can be customized according to the particular area of application. Some of the distinctive features of Sawtooth are modularity, multi-language support, parallel transaction execution, and pluggable consensus. The modular structure of Sawtooth helps in modifying particular operations without needing to make changes throughout the architecture.

\begin{figure}[]

\centering

\includegraphics[scale=0.25]{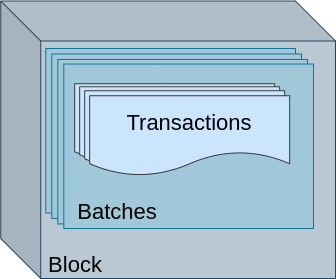}

\caption{Structure of a Hyperledger Sawtooth block}

\label{fig:block}

\end{figure}

In Sawtooth, smart contracts are referred to as transaction families, and the logic for the contract is present in the respective families' transaction processors. Modifications to the state are performed through transactions and they are always wrapped inside a batch. A batch is the atomic unit of change in the system and multiple batches are combined to form a block (\Figref{block}). The node architecture of Sawtooth includes five modules that play crucial roles in blockchain development: global state, journal, transaction scheduler, transaction executor, and transaction processor. The global state containing the data of transaction families of Sawtooth is stored using a radix Merkel tree data structure. The Journal module contains a block completer, block validator and block publishers that deal with creating, verifying and publishing blocks. It is the Transaction Scheduler module's responsibility to identify the dependencies between transactions and schedule transactions that result in conflict-free execution. In order to execute a transaction, the transaction executor collects the context of the transaction. The detailed architecture is expained in appendix A \cite{piduguralla+:2023:arxiv}.

Hyperledger Sawtooth architecture includes a parallel transaction scheduler (tree-scheduler) that uses a Merkle-Radix tree with nodes that are addressable by state addresses. This tree is called the predecessor tree. Each node in the tree represents one address in the Sawtooth state and a read list and write list are maintained at each node. Whenever an executor thread requests the next transaction, the scheduler inspects the list of unscheduled transactions and the status of their predecessors. The drawbacks of the tree scheduler are that the status of predecessor transactions needs to be checked before a transaction can be scheduled. The construction of the tree data structure is serial. The number of addresses accessed in a block is generally higher than the total number of transactions. The memory space utilized for an address based data structure is more than a transaction based data structure. The miners and validators both construct the tree at their end instead of sharing the tree constructed by the miners.

The proposed framework for transaction execution on the blockchain would improve the throughput of SCTs by making the block creation and validation process concurrent. SCTs that are independent of each other are executed in parallel in the framework. The dependencies are represented as a direct acyclic graph (DAG) based on transaction inputs and outputs. DAG sharing and smart validator module designs are also included in the framework to further optimize block validation.

\section{Proposed Framework}
\label{apn:framework}
In this section, the proposed framework for parallel transaction execution in blockchains through static analysis by miners and validators of the network is detailed. This framework introduces \emph{parallel scheduler} and \emph{smart validator} modules into the blockchain node architecture. The parallel scheduler is responsible for identifying the dependencies among the transactions in the block and scheduling them for conflict-free execution. This is done by determining the dependencies among the transactions. The identified dependencies are represented by a direct acyclic graph (DAG) that is shared along with the block to minimize the validation time of the blockchain. DAG shared along with the blocks are validated by the \emph{smart validator}, which determines if the miners have shared incorrect graph. In this section, a detailed framework \Figref{framework} and the pseudo-codes of the modules are presented. The implementation and experiments conducted in Hyperledger Sawtooth are discussed in further sections.

\begin{figure}[h]
    \centering
	{\includegraphics[scale=0.23]{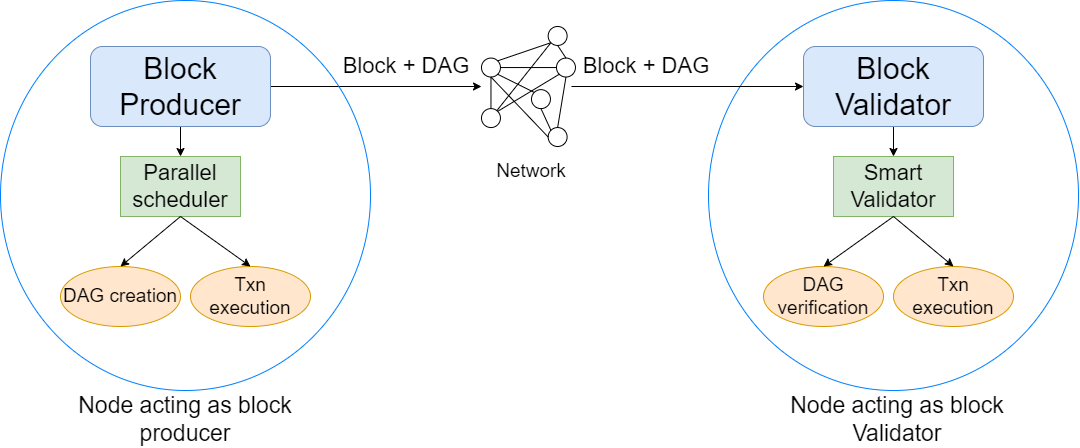}}
	 \caption{Proposed framework in the blockchain}
	\label{fig:framework}
\end{figure}

\subsection{Parallel Scheduler}
The parallel scheduler module is part of the miner in the proposed framework. It performs the operations of DAG creation and conflict-free transaction execution. Both operations are multi-threaded for higher throughput. 

\noindent
\textbf{DAG creation:}
The DAG creation is initiated when the miner creates the next block in the blockchain. In blockchains like Sawtooth, the addresses that the transactions read from, and write to, are present in the transaction header. Using this information, we can derive the addresses based on the transaction details without having to execute the transaction. Analyzing these input (read) and output (write) addresses, the parallel scheduler computes the dependencies between transactions.

On receiving the block from the block producer, transaction ID is assigned to the transactions based on the position of them in the block ($T_1, T_2, T_3$...). Multiple threads are deployed to generate the DAG. Each thread in the parallel scheduler claims the next unclaimed transaction in the block and a global atomic counter is maintained to keep track. The input addresses of the transaction ($T_a$) are compared with all the output addresses of transactions (e.g., $T_b$) with a higher ID. Correspondingly, the output addresses of $T_a$ are compared with the input and output addresses of $T_b$. If there are any common addresses identified in the above checks, an edge is added from $T_a$ to $T_b$ in the DAG. An adjacency matrix data structure is implemented for representing the graph, and an atomic array to maintain the indegree count of the vertices. The linked list DAG graph are illustrated in the \figref{DAG}. We have also implemented an parallel scheduler module with concurrent linked list structure Fig.~\ref{fig:LLDAG}. The pseudo-code is detailed in \algoref{createDAG}.
\begin{figure}[!tbp]
  \centering
  \begin{minipage}[b]{0.4\textwidth}
   { \includegraphics[scale=0.25]{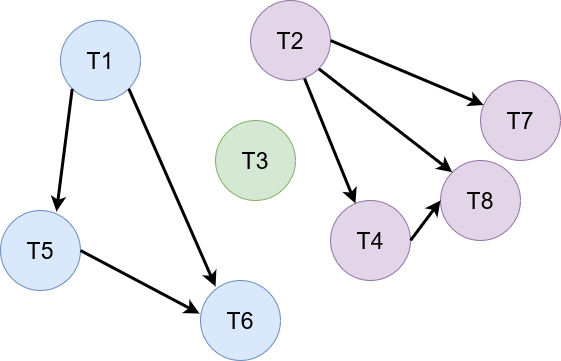}
	 \caption{Direct acyclic graph representation of the dependencies in the block.}
	\label{fig:DAG}
 }
  \end{minipage}
  \hfill
  \begin{minipage}[b]{0.48\textwidth}
   { \centering 
   \includegraphics[scale=0.3]{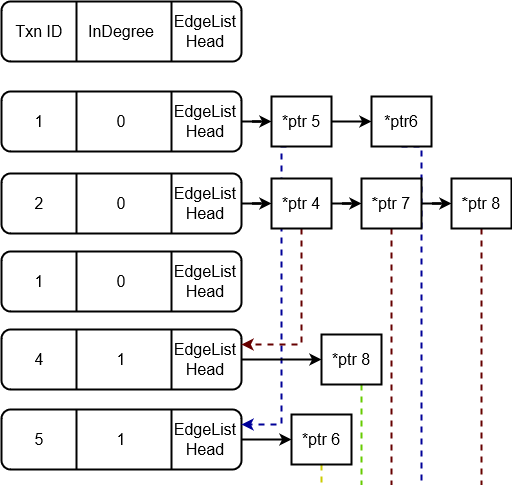}
	 \caption{Linked list representation of the DAG.}
	\label{fig:LLDAG}}
  \end{minipage}
\end{figure}
\begin{algorithm}
    
     { \scriptsize 
    \caption{Multi-threaded createDAG(): $m$ threads concurrently create the DAG}
    \label{alg:createDAG}
    \begin{algorithmic}[1]
    \Procedure{createDAG}{$block$}\Comment{ The block to be produced or validated is the input}
       \While{$1$}
          \State $T_{i} \gets txnCounter.get\&inc()$ \Comment{Claim the next transaction available}
          \If{$T_{i} > txn\_count$}
          \State $txnCounter.dec()$ 
          \State return \Comment{Return if all the transactions are processed}
          \EndIf
          \State Graph\_Node $*txn$     = new Graph\_Node
          \State DAG$\rightarrow$add\_node($T_{i}$, $txn$)\Comment{adding the node to the graph}
          \For{$T_{j}=T_{i}+1$ to $ txn\_count$}\Comment{finding dependent transactions}
            \State $flagEdge$=false
            \If{$T_{i}.readList \cap T_{j}.writeList$}\Comment{Checking for RW and WW conflicts}
            \State $flagEdge$=True
            \EndIf
            \If{$T_{i}.writeList \cap T_{j}.readList$}
            \State $flagEdge$=True
            \EndIf
            \If{$T_{i}.writeList \cap T_{j}.writeList$}
            \State $flagEdge$=True
            \EndIf
            \If{$flagEdge$}
            \State DAG$\rightarrow$add\_edge($T_{i}$, $T_{j}$)
            \EndIf
          \EndFor
       \EndWhile
    \EndProcedure \Comment{The threads end when the DAG is complete} 
    \end{algorithmic}
}
\end{algorithm}

\vspace{1mm}
\noindent
\textbf{Transaction execution:}
For parallel transaction execution, we initiate multiple threads to process the transactions in the block. Each thread selects a transaction for execution using the indegree array. If the indegree of a transaction is zero this indicates that the transaction does not have any predecessor dependent transactions and can be executed ($T_1, T_3,$ and $T_2$ in \figref{DAG}). If no such transactions are available, the threads wait until one is available or end execution if all the transactions have completed execution. Upon selecting a transaction, it is executed and the indegree of all the outgoing edge transactions ($T_5$ and $T_6$ for $T_1$) is decremented by 1. The search for a transaction with indegree zero can be further optimized by initiating the search from the last transaction ID selected. This further reduces the time it takes to find the next transaction. The pseudo-code that multiple threads execute while selecting a transaction is present in \algoref{selectTxn}
\begin{algorithm}
{
\scriptsize
\caption{Multi-threaded selectTxn(): threads concurrently search the DAG for next transaction to execute}
\label{alg:selectTxn}
\begin{algorithmic}[1]
\makeatletter\setcounter{ALG@line}{26}\makeatother
\Procedure{selectTxn}{$DAG$}
      \For{$T_{i}=pos$ To $txn\_count$}\Comment{iterate over until all transactions to find transaction for execution}
        \If{$T_{i}.indeg == 0$}\Comment{Checking for txn with zero indegree}
        \If{$T_{i}.indeg.CAS(0,-1)$}
        \State $pos \gets T_{i}$ \Comment{store the last position for fast parsing}
        \State return $T_{i}$
        \EndIf
        \EndIf
        \EndFor
        \For{$T_{i}=0$ To $pos$}\Comment{iterate over until all transactions to find transaction for execution}
        \If{$T_{i}.indeg == 0$}\Comment{Checking for txn with zero indegree}
        \If{$T_{i}.indeg.CAS(0,-1)$}
        \State $pos \gets T_{i}$ \Comment{store the last position for fast parsing}
        \State return $T_{i}$
        \EndIf
        \EndIf
        \EndFor
        \State return $-1$ \Comment{The threads end when a transaction is selected or all the transaction are executed} 
\EndProcedure 
\end{algorithmic}
}
\end{algorithm}

\subsection{Smart Validatior}

\textbf{DAG sharing:} 
In this phase, the DAG created by the block creator in the blockchain network is shared with the validators in the network. In this way, the blockchain network's validators can conserve the time taken for DAG creation. Malicious miners in the network can exploit the DAG sharing feature to disrupt blockchain functioning. Sharing an inapt DAG with one or more edges missing could lead to validators malfunctioning. A DAG with additional edges will increase the validation process and slow the whole blockchain network.
\vspace{1mm}

\noindent
\textbf{DAG validation:} 
The DAG created by the block creator in the blockchain network is shared with the validators in the network. In this way, the blockchain network's validators can conserve the time taken for DAG creation. In order to address this security risk that is present in DAG sharing, a smart validator is designed for verifying DAGs. The smart validator checks for missing edges or extra edges present in the DAG shared. This is performed by multiple threads for swift graph verification. For all the addresses accessed in the block, a read list and a write list are maintained. By parsing the transactions in the block, transaction IDs are added to the read and write lists of respective addresses. First the check for missing edges is performed by making sure that the transactions in the write list of an address have an edge with all the transactions present in the respective read and write lists. A failed check indicates that the DAG shared has a missing edge. During the check, the number of outgoing edges is calculated for each transaction. We compare the sum of the outgoing edges obtained in the above operation with the in-degree array shared along the block. This function identifies if any extra edges are present in the DAG. As a result of this process, the smart validator verifies the DAG information and recognizes malicious miners. The procedure to handle such miners depends on the type and functionalities of blockchain technology. Malicious miners in the network can exploit the DAG sharing feature to disrupt blockchain functioning. Sharing an inapt DAG with one or more edges missing could lead to validators malfunctioning. A DAG with additional edges will increase the validation process and slow the whole blockchain network.
\begin{algorithm}[h]
{\scriptsize
\label{Algo:smartValidator}
\caption{Multi-threaded smartValidator(): $m$ threads concurrently verify the DAG shared}
\begin{algorithmic}[1]
\makeatletter\setcounter{ALG@line}{45}\makeatother
\Procedure{smartValidator}{$DAG$}
   \While{$ !mMiner$}
      \State $Adds \gets addsCounter.get\&inc()$ \Comment{Claim the next address for analyzing}
      \If{$Adds > adds\_count$}
      \State $addsCounter.dec()$ 
      \State return \Comment{Return if all the address are processed}
      \EndIf
      \For{$i=0$ To $lenght(Adds.readList)$}\Comment{procedure for checking for missing edges}
        \State $T_{i} \gets Adds.readList[i]$
        \For{ $j=0$ To $lenght(Adds.writeList)$}\Comment{read-write dependencies}
        \State $T_{j} \gets Adds.writeList[j]$
        \If{ !checkEdge($T_{i},T_{j}$)}
        \State $mMiner \gets True$
        \State return
        \EndIf
    \State incDeg($T_{i},T_{j}$) \Comment{Increment the indegree of lower txn and mark the edge}
        \EndFor
        \EndFor
        \For{$i=0$ To $lenght(Adds.writeList)$}
        \State $T_{i} \gets Adds.writeList[i]$
        \For{ $j=0$ To $lenght(Adds.writeList)$}\Comment{write-write dependencies}
        \State $T_{j} \gets Adds.writeList[j]$
        \If{ !checkEdge($T_{i},T_{j}$)}
        \State $mMiner \gets True$
        \State return
        \EndIf
    \State incDeg($T_{i},T_{j}$) \Comment{Increment the indegree of lower txn and mark the edge}
        \EndFor
        \EndFor
        \For{ $i=0$ To $ txn\_count$}\Comment{procedure for checking for extra edges}
        \If{ $T_{i}.inDeg !=  T_{i}.calDeg$ } \Comment{Check if shared indegree is equal to calculated indegree}
        \State $mMiner \gets True$
        \State return
        \EndIf
        \EndFor
   \EndWhile
\EndProcedure
\end{algorithmic}
}
\end{algorithm}
\section{Experiments Analysis} 
\label{apn:experiment}
\begin{filecontents}{thu0.dat}
X Txns_per_Block	Tree_construction ADJ_DAG_construction LL_DAG_construction Smart_validation
1	10	0.7984209061	0.2659583092	0.3734397888	0.1039648056
2	20	1.615052223	0.5902338028	0.6467151642	0.3107404709
3	30	2.313022614	0.7892274857	0.9951996803	0.4173803329
4	40	2.372303009	0.8603191376	1.497488022	0.6524372101
5	50	2.94533968	1.249682903	1.699769497	0.6451368332
\end{filecontents}
\begin{filecontents}{thu1.dat}
X No_of_Blocks	Tree_construction ADJ_DAG_construction LL_DAG_construction Smart_validation
1	200	0.3161859512	0.1629066467	0.1576757431	0.09954929352
2	400	0.6354856491	0.228767395	0.2982807159	0.1789808273
3	600	0.9680128098	0.3041982651	0.4385328293	0.1964473724
4	800	1.274271011	0.3853130341	0.5941390991	0.3069991684
5	1000	1.598415375	0.498714447	0.7555198669	0.2940859985
6	1200	1.935157776	0.513586998	0.7555198669	0.4477024078
\end{filecontents}
\begin{filecontents}{thu2.dat}
X	Dependency	Tree_construction ADJ_DAG_construction LL_DAG_construction Smart_validation
1	0	0.07823467255	0.02288818359	0.03119391441	0.02125430107
2	20	0.06937980652	0.02351903915	0.03363013268	0.01933717728
3	40	0.06450510025	0.02366423607	0.03622436523	0.01897859573
4	60	0.06196904182	0.02374315262	0.03982257843	0.01770114899
5	80	0.0599963665	0.02456068993	0.0441069603	0.01679847527
6	100	0.05562281609	0.02673864365	0.04692935944	0.01641255646
\end{filecontents}
\begin{filecontents}{thu3.dat}
X Txns_per_Block	ADJ_execution	Tree_execution	LL_DAG_execution	Serial_execution	Smart_validator
1	200	64.948	67.96	65.748	72.566	64.68317
2	400	132.9878	142.5593	134.88152	152.7071	131.465758
3	600	198.5309	213.426	203.2716	252.5132	196.578038
4	800	274.009	289.5052	274.11	317.66	272.35906
5	1000	333.173	365.9146	337.0668	413.1578	329.8634313
6	1200	401.9898	467.417	406.108	513.485	398.0588036
\end{filecontents}
\begin{filecontents}{thu4.dat}
X No_of_Blocks	ADJ_execution	Tree_execution	LL_DAG_execution	Serial_execution	Smart_validator
1	10	174.8611	180.6332	178.779	212.25	173.36612
2	20	342.7104	376.84	356.66	426.1996	339.011
3	30	505.2825	564.06	534.33	633.45	481.0893
4	40	697.37	719.724	715.684	878.6	693.15024
5	50	851.831	949.5	893.3	1076.1	842.5037
\end{filecontents}
\begin{filecontents}{thu5.dat}
X	Dependency	ADJ_execution	Tree_execution	LL_DAG_execution	Serial_execution
1	0	55.00247511	56.47962498	58.68889019	25.87322122
2	20	54.91186645	45.70530136	56.11735261	26.08242045
3	40	55.43237251	40.9109226	57.21151096	26.39218791
4	60	55.53241705	38.99091512	58.19705523	26.31232733
5	80	57.18533768	36.72150411	58.63383172	25.68713075
6	100	56.54829224	35.04836675	52.39442523	27.53303965
\end{filecontents}
\begin{filecontents}{thu6.dat}
X Txns_per_Block	ADJ_execution	Tree_execution	LL_DAG_execution	Serial_execution
1	10	134.23105	180.04885	147.72505	209.0674
2	20	270.7299	356.114	292.8033	433.417
3	30	405.624	531.0999	439.615	668.001
4	40	535.8978	722.8202	584.8257	892.634
5	50	666.8875	891.1055	735.94475	1092.783
\end{filecontents}
\begin{filecontents}{thu7.dat}
X No_of_Blocks	ADJ_execution	Tree_execution	LL_DAG_execution	Serial_execution
1	200	43.615	49.775	41.4275	95.18
2	400	91.91125	107.005	80.1495	143.2448
3	600	135.65	164.824	125.59725	208.1199
4	800	177.1425	233.165	177.2975	258.5405
5	1000	233.957	307.21775	235.43725	348.774
6	1200	287.8475	380.3495	319.56175	489.5555
\end{filecontents}
\begin{filecontents}{thu8.dat}
X	Dependency	ADJ_execution	Tree_execution	LL_DAG_execution	Serial_execution
1	0	72.20346938	57.41063676	70.01146438	32.54956689
2	20	71.89977172	51.5682551	67.67891768	35.53830763
3	40	72.14291511	49.61751101	68.06714824	37.38108143
4	60	72.04610951	49.12073878	68.06425265	38.73594914
5	80	73.26275688	48.32464496	67.62182494	39.09609821
6	100	72.61636773	47.24893099	63.54703672	42.74646006
\end{filecontents}
\begin{filecontents}{thu9.dat}
X Txns_per_Block	ADJ_execution	Tree_execution	LL_DAG_execution	Serial_execution
1	200	19.89	44.63	19.89	40.25
2	400	43.98	103.13	43.98	88.12
3	600	79.57	177.65	79.57	148.16
4	800	140.49	264.68	140.49	224.33
5	1000 229.86	344.99	229.86	306.57
6	1200 359.33	488.35	359.33	399.08
\end{filecontents}
\begin{filecontents}{thu10.dat}
X No_of_Blocks	ADJ_execution	Tree_execution	LL_DAG_execution	Serial_execution
1	10	127.58	187.17	127.58	150.95
2	20	254.94	368.38	254.94	291.39
3	30	386.07	556.06	386.07	447.3
4	40	509.3	748.7	509.3	595.82
5	50	632.5	933.22	632.5	728.17
\end{filecontents}
\begin{filecontents}{thu11.dat}
X	Dependency	ADJ_execution	Tree_execution	LL_DAG_execution	Serial_execution
1	0	86.99434537	43.48960598	86.99434537	49.01840641
2	20	85.8811405	43.88467108	85.8811405	50.24115756
3	40	85.74858515	46.10100731	85.74858515	53.73743887
4	60	81.27107969	50.48465267	81.27107969	60.41565974
5	80	78.16774799	59.18210333	78.16774799	75.34659433
6	100	73.2976618	79.90092286	73.2976618	114.0120853
\end{filecontents}
\subsection{Implementation details}
We have chosen Hyperledger Sawtooth as our testing platform since it has a parallel scheduler. To incorporate the DAG framework into the Sawtooth architecture, we have to modify the current parallel scheduler module. Due to the modular nature of Sawtooth, any modifications made to a module can be restricted within the module itself without impacting the whole architecture. For this we need to make sure the functions through which other modules interact with the parallel scheduler module are modified with care.

We have now implemented the DAG sharing and Smart validator modules in Sawtooth 1.2.6. Our modulues are in CPP language while the Sawtooth is a combinatin of Rust and Python. We have choosen CPP for its concurrent capabilities with low overhead cost. For DAG sharing, we have modified the block after the block creator has verified that all the transactions in the block are valid. In Sawtooth 1.2.6 we used the input and output addresses present in the transaction structure. Every transaction in the DAG is represented by a graph node and the outgoing edges indicate dependent transactions. In order to ensure smart validation, the DAG is shared across the block. We have used the dependencies list component of the transaction structure to incorporate DAG into the block. This modified block is shared with the network for validation. 

Initially, we implemented the DAG using a linked list data structure. This is ideal when the size of the graph is unknown and the graph needs to be dynamic. Given that the number of transactions in a block does not change and there is a limit to the number of transactions a block can contain, we have designed an adjacency matrix implementation for DAG. The results have shown further improvement over the linked list implementation. This is because the adjacency matrix is direct access whereas the linked list implementation would require parsing the list to reach the desired node.

In Sawtooth 1.2.6 block validators, the smart validator is implemented. The DAG is recreated using the dependencies list provided in the transaction by the block creator. This optimizes the time taken to create the DAG again in the validators by comparing all the input and output addresses in the transactions. Once created the smart validator first checks for missing edges that should have been present in the DAG shared by the block creator. This is done by generating a list of transactions that access each address and verifying if an edge is present between them. While doing this, we all keep track of the outgoing edge count for each transaction. If the count matches the DAG shared by the block creator that indicates no edges are missing. This way we verify the DAG shared faster than creating the DAG again, optimizing the throughput.
\vspace{1mm}

\noindent
\textbf{Transaction Families: }
We have implemented four transaction families SimpleWallet, Intkey, Voting and Insurance. In SimpleWallet one can create accounts, deposit, withdraw and transfer money. In Intkey clients are increment and decrement values stored in different variables. In Voting transaction family the transactions are create parties, add voters and voters can vote to one of the parties. Insurance family is a data storage transaction family where user details like ID, name, address details are stored and manipulated. To control the percentage of conflicts between transactions, one must have control over the keys created. We have modified the batch creation technique in these transaction families to allow the user to submit multiple transactions in a batch. This way we can not just control the number of transactions in batch but also the conflicts among the transactions in a batch. We individually observed each transaction family behaviour under various experiments and a mix of all four types of types of transaction in a block.
\subsection{Experiments}
We have extensively tested our proposed framework by implementing it in Hyperledger Sawtooth 1.2.6. To test the framework across different scenarios we have devised three conflict parameters (CP) that indicate the level of dependency among the transaction. Conflict parameter one (CP1) is the portion of transactions that have at least one dependency. CP2 ratio of dependencies (edges) to total possibles dependencies. CP3 degree of parallelism in the DAG i.e., the number of independent components in the graph. We have designed four experiments, each varying one parameter while the rest of the parameters are constant. The four parameters are number of blocks, number of transaction in the block, degree of dependency and number of threads. Each named experiment setup one to four respectively. We have named adjacency matrix implementation of our proposed framework as \emph{Adj\_DAG}, linked list implementation as \emph{LL\_DAG}. The Sawtooth inbuilt parallel scheduler used tree data structure, accordingly we have named it as \emph{Tree} and serial execution output as \emph{Serial} in our results. We have observed that due to the presence of global lock in python, the change in number of threads has not impacted the performance significantly. Due to this we have not presented the experiment four results in this work but are available in appendix C \cite{piduguralla+:2023:arxiv}.

It can be observed from \Figref{combined_plots}, that adjacency matrix and linked list implementation of our proposed framework perform significantly better than the tree based parallel scheduler in Sawtooth and serial execution. We have illustrated here some of the experiments we have conducted and the rest can be found in appendix C \cite{piduguralla+:2023:arxiv}. Figure~\ref{fig:bank_E1_CP1},~\ref{fig:voting_E1_CP3} and~\ref{fig:Mixed_E1_CP3} illustrate the impact of change in number of blocks on various schedulers. On average the speedup of \emph{Adj\_DAG} over \emph{Serial} is $1.58$ and \emph{LL\_DAG} is $1.43$ times, while \emph{Tree} is $1.22$. The average speedup of  \emph{Adj\_DAG} over \emph{Tree} is $1.29$ and \emph{LL\_DAG} is $1.17$ times.

\begin{figure*}
    \vspace{-0.5cm}
    \centering
    \begin{subfigure}[b]{0.32\textwidth}
        \centering
        \includegraphics[width=\textwidth]{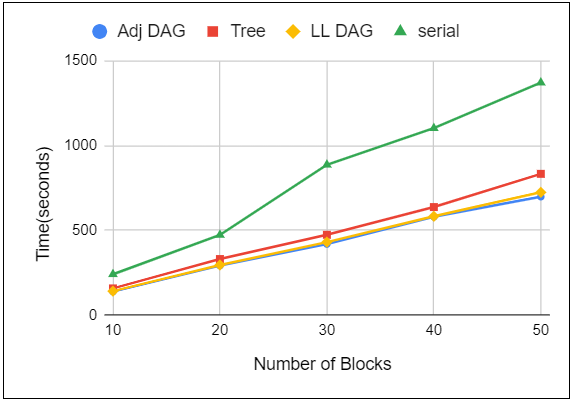}
        \vspace{-1.4\baselineskip}
        \caption{\tiny SimpleWallet: Experiment one, CP1}
        \label{fig:bank_E1_CP1}
    \end{subfigure}
    \begin{subfigure}[b]{0.32\textwidth}
        \centering
        \includegraphics[width=\textwidth]{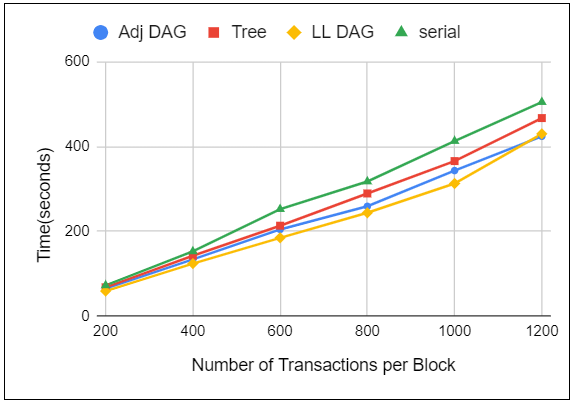}
        \vspace{-1.4\baselineskip}
        \caption{\tiny SimpleWallet: Experiment two, CP2}
        \label{fig:bank_E2_CP2}
    \end{subfigure}
    \begin{subfigure}[b]{0.32\textwidth}
        \centering
        \includegraphics[width=\textwidth]{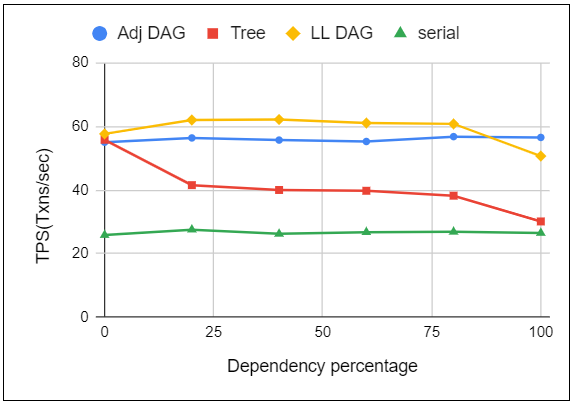}
        \vspace{-1.4\baselineskip}
        \caption{\tiny SimpleWallet: Experiment three, CP3}
        \label{fig:bank_E3_CP3}
    \end{subfigure}
    \begin{subfigure}[b]{0.32\textwidth}
        \centering
        \includegraphics[width=\textwidth]{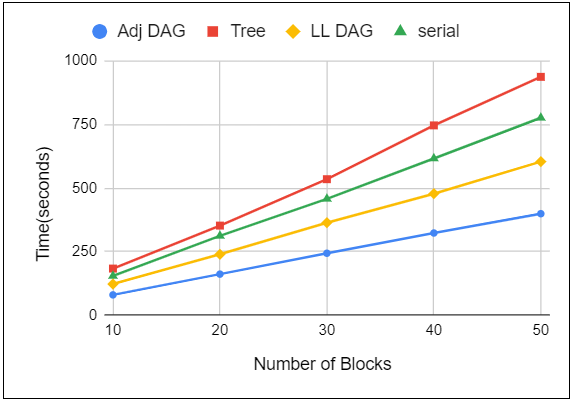}
        \vspace{-1.4\baselineskip}
        \caption{\tiny Voting: Experiment one, CP3}
        \label{fig:voting_E1_CP3}
    \end{subfigure}
    \begin{subfigure}[b]{0.32\textwidth}
        \centering
        \includegraphics[width=\textwidth]{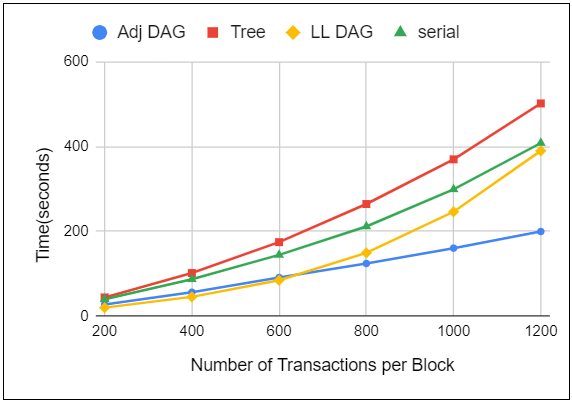}
        \vspace{-1.4\baselineskip}
        \caption{\tiny Voting: Experiment two, CP2}
        \label{fig:voting_E2_CP2}
    \end{subfigure}
    \begin{subfigure}[b]{0.32\textwidth}
        \centering
        \includegraphics[width=\textwidth]{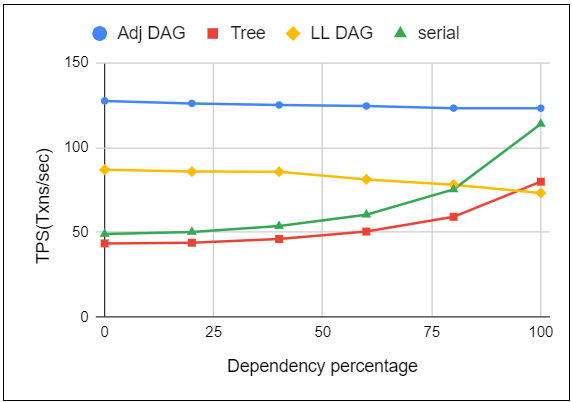}
        \vspace{-1.4\baselineskip}
        \caption{\tiny Voting: Experiment three, CP1}
        \label{fig:voting_E3_CP1}
    \end{subfigure}
    \begin{subfigure}[b]{0.32\textwidth}
        \centering
        \includegraphics[width=\textwidth]{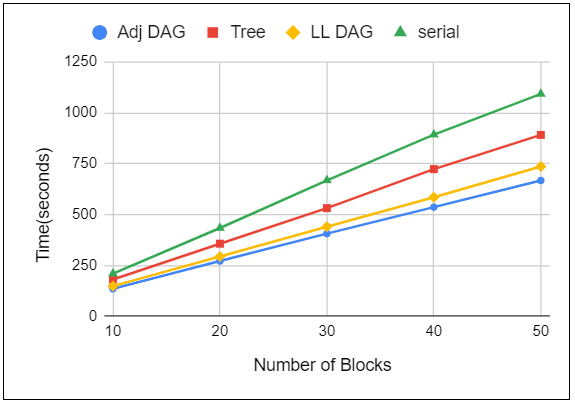}
        \vspace{-1.4\baselineskip}
        \caption{\tiny Mixed Block: Experiment one, CP3}
        \label{fig:Mixed_E1_CP3}
    \end{subfigure}
    \begin{subfigure}[b]{0.32\textwidth}
        \centering
        \includegraphics[width=\textwidth]{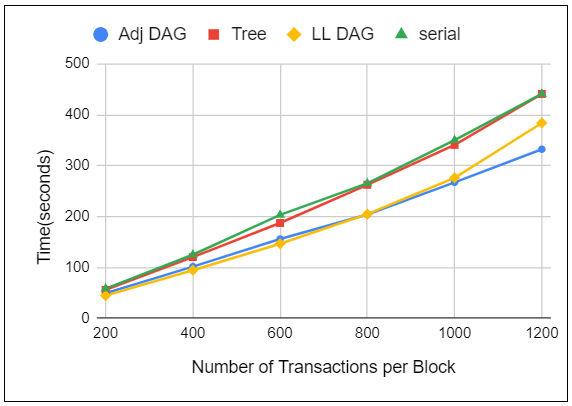}
        \vspace{-1.4\baselineskip}
        \caption{\tiny Mixed Block: Experiment two, CP2}
        \label{fig:Mixed_E2_CP2}
    \end{subfigure}
    \begin{subfigure}[b]{0.32\textwidth}
        \centering
        \includegraphics[width=\textwidth]{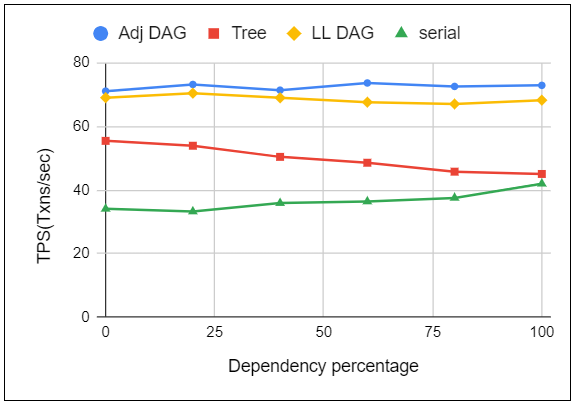}
        \vspace{-1.4\baselineskip}
        \caption{\tiny Mixed Block: Experiment three, CP1}
        \label{fig:Mixed_E3_CP1}
    \end{subfigure}
    \begin{subfigure}[b]{0.32\textwidth}
        \centering
        \includegraphics[width=\textwidth]{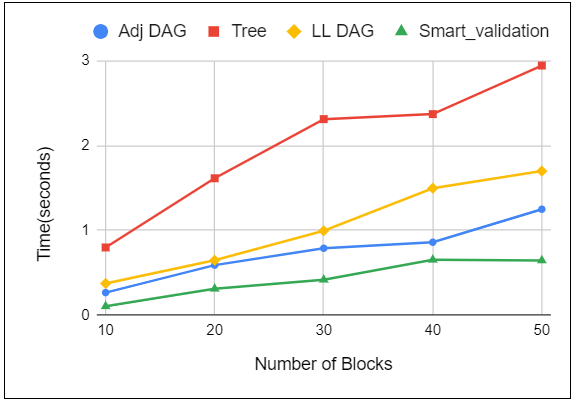}
        \vspace{-1.4\baselineskip}
        \caption{DS creation time varying the number of blocks}
        \label{fig:Data_structure_1}
    \end{subfigure}
    \begin{subfigure}[b]{0.32\textwidth}
        \centering
        \includegraphics[width=\textwidth]{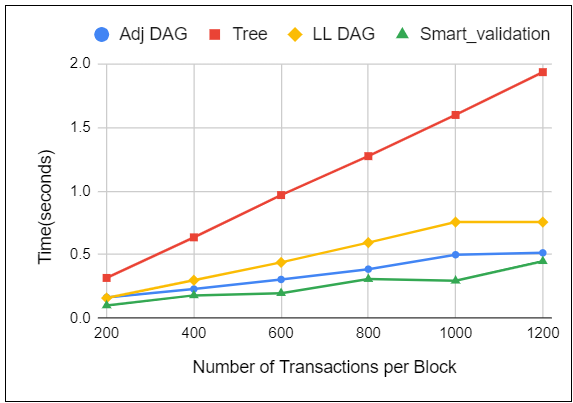}
        \vspace{-1.4\baselineskip}
        \caption{DS creation time varying the number of transaction per Block}
        \label{fig:Data_structure_2}
    \end{subfigure}
    \begin{subfigure}[b]{0.32\textwidth}
        \centering
        \includegraphics[width=\textwidth]{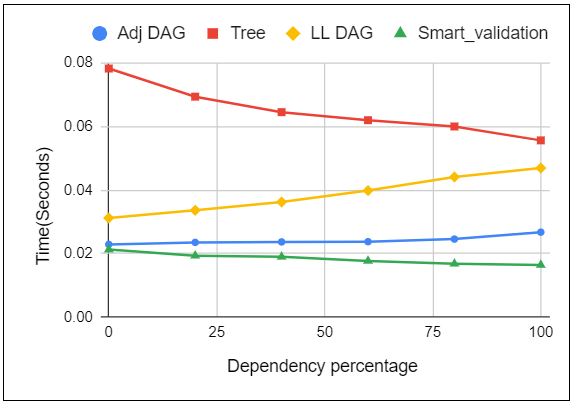}
        \vspace{-1.4\baselineskip}
        \caption{DS creation time varying the dependency percentage}
        \label{fig:Data_structure_3}
    \end{subfigure}
    \vspace{-1\baselineskip}
    \caption{Detailed analysis of our proposed framework performance with both adjacency matrix and linked list implementations in Hyperledger Sawtooth 1.2.6.}
    \label{fig:combined_plots}
\end{figure*}

Experiment two results are depicted in Figures~\ref{fig:bank_E2_CP2},~\ref{fig:voting_E2_CP2} and ~\ref{fig:Mixed_E2_CP2}. We can observe that the gap between serial and the parallel schedulers increases with increase in number of transactions in block. As higher number of transactions more scope for concurrency. In Experiment three we have varied the degree of dependency between the transactions to over its impact on the transactions per second (TPS). The dependency amongs the tranactions is increased by making multiple transactions access the same accounts/addresses. Ideally varying the conflict percentage without changing the number of transactions should not impact serial execution throughput. But decrease in number of memory accesses improves the execution time. We can observe this phenomenon in serial execution time in Figures~\ref{fig:bank_E3_CP3},~\ref{fig:voting_E3_CP1} and~\ref{fig:Mixed_E3_CP1}. We can also observe that while in \emph{ADJ\_DAG} and \emph{LL\_DAG} the effects balance and we can observe a steady TPS, in \emph{Tree} scheduler the performance further decreases the TPS.

Figures~\ref{fig:voting_E1_CP3},~\ref{fig:voting_E2_CP2} and ~\ref{fig:voting_E3_CP1} show the Voting transaction family behaviour under experiements one, two and three. Unlike the rest of the transaction families, The \emph{Serial} execution is faster in \emph{Tree} scheduler in Voting transaction family. We have discovered that the entire voters list and the parties are list are accesses for any transaction in this family instead of the one particular voter and party addresses. We should note that the design of the transaction family (smart contract) plays a crucial role in the performance optimization of the blockchain. One can observe that the \emph{ADJ\_DAG} and \emph{LL\_DAG} still perform better as they use transactions to represent the dependency data structure unlike \emph{Tree} scheduler that uses addresses. 

The \emph{Smart Validator} framework efficiently verifies the DAG shared by the miner and eliminates the need to reconstruct the DAG at every block validator. The execution time of the smart validator and adjacency DAG scheduler will only vary in the dependency graph creation aspect. To highlight the optimization acheived through smart validator we have analyzed the dependency data structure (DS) creation and verification time for various schedulars in \Figref{combined_plots}. One can observe that the \emph{smart validator} takes the least execution time from the Figures~\ref{fig:Data_structure_1},~\ref{fig:Data_structure_2} and ~\ref{fig:Data_structure_3}. From observing \Figref{Data_structure_3} we can state that the \emph{smart validator} is stable against the variations in the dependency in the graph. The rest of the extensive experiments results including experiments on \emph{Intkey} and \emph{Insurance} transaction families can be found in appendix C \cite{piduguralla+:2023:arxiv}.
\section{Related research} 

In the past few years, blockchain technology has gained tremendous popularity and is used in a wide variety of fields. Although blockchains are capable of offering a variety of advantages, one of the most cited concerns is scalability. Consensus protocols and transaction throughput are the two major bottlenecks of blockchain performance. In contrast to proof of work (PoW), alternative consensus protocols like proof of stake (PoS) and proof of elapsed time (PoET) are introduced to minimize consensus time. However, transaction throughput continues to be a hindrance to scalability. Exercising parallel execution of transactions in a block is one of the solutions to optimize blockchains.

Dickerson et al. \cite{Dickerson+:ACSC:PODC:2017} introduced the concept of parallel execution of Ethereum \cite{ethereum} transactions using Software Transactional Memory (STM). The miners execute transactions in the block using STM, and the serializable concurrent state is discovered. This is then shared with the validators to achieve deterministic execution. Following this there have been multiple STM-based concurrent transaction execution frameworks for blockchains \cite{ParBlock:ICDCS:2019,parwat:springer:2021,blockSTM:2022,OptSmart:DPD:2022}. Besides the significant overhead associated with executing transactions through STMs, transactions sometimes fail due to dependencies and must be re-executed. Another drawback is that they cannot have operations that cannot be undone, which is a significant obstacle to smart contract design. During concurrent execution, STM-based approaches identify conflicts among transactions dynamically, i.e., during execution. This results in various transactions failing or rolling back to resolve the conflict. This has a significant impact on throughput and is not optimal for blocks with high interdependencies. In general, a dynamic approach is ideal, but it is not necessary for blockchains whose addresses are either included in the transactions or are easily inferred. For such systems, we propose a parallel execution framework for transactions in a block.

Sharding is another popular technique to address scaling issues in blockchains. In this, the nodes present in the network are categorized into small groups. Each group processes transactions parallelly with the other groups. Sharding is being explored earnestly as a solution to scalability issues \cite{shard:CCS:2016,Rapidchain:ACM:2018,Kokoris+:EEE:SP:2018,Zamani+:ACM:SIGSAC:2018,Dang+:SIGMOD:2019,SoCC:IEEE:2021,George+:ACM:PLDR:2021,DiPETrans:CPE:2022,Zheng+:IEEE:TII:2022}. The criteria for sharding are different in each approach. Few are specifically designed for monetary transactions in blockchains \cite{Kokoris+:EEE:SP:2018,Zamani+:ACM:SIGSAC:2018}. This leads to smart contract transactions being processed on a single shard leading to an inefficient distribution of computational work. The implementation of transactions that span across smart contracts becomes intricate with sharding. Protocols have to be designed specifically for inter-shard communication further increasing the complexity of the design \cite{Dang+:SIGMOD:2019}. The sharding technique limits the degree of parallelization to the number of shards irrespective of actual capacity. If the shards are dynamic, in the worst case the number of shards is equal to the number of transactions. This framework is unsuitable for transactions with high interdependencies.
\section{Conclusion and future work}
\label{apn:conclusion}
In this paper, we designed a framework for concurrent transaction execution in blockchains through static analysis of miners and validators of the network. This framework introduces parallel scheduler and smart validator modules into the blockchain node architecture. The parallel scheduler is responsible for identifying the dependencies among the transactions in the block and scheduling them for conflict-free execution. The determined dependencies are represented by a direct acyclic graph (DAG) and are shared along with the block to minimize the validation time of the blockchain. DAGs are validated using the smart validator, which determines if malicious miners have shared inaccurate graphs. Our framework has been extensively tested in Hyperledger Sawtooth 1.2.6. Next, we plan to improve fault tolerance and scalability for each blockchain node individually.
\bibliographystyle{splncs04}
\bibliography{citations}
\clearpage
\appendix
\section*{Appendix}
\label{apn:appendix}
\section{Hyperledger Sawtooth}
\subsection{Introduction}
Blockchain is a decentralized, distributed, immutable ledger system shared across a network introduced in 2008. The Hyperledger Foundation is “an open-source collaborative effort to advance cross-industry blockchain technologies. It is a global collaboration, hosted by The Linux Foundation, including leaders in finance, banking, IoT, supply chain, manufacturing, and technology” [1]. One of the most popular Hyperledger products is Hyperledger Sawtooth. A key goal of Sawtooth is to make Smart Contract execution safe, which is crucial for enterprise applications [2]. Hyperledger Sawtooth is gaining popularity for smart contract execution because of its highly modular approach. It facilitates the Sawtooth core design such that the transaction rules, permissions, and consensus algorithms can be customized according to the particular area of application.
Sawtooth has some distinctive features: 

\textbf{Separation:} The separation between the application level and the core system. With Sawtooth, users can build their models based on their business needs. Transaction rules, permissions, and consensus settings are handled in the Transaction Processing part. The Transaction Processors are the Server side of the application, while validators manage the core functionalities like verifying the transactions submitted using the particular consensus algorithm.

\textbf{Multi Language Support:} One of the striking features and the reason for the popularity of Sawtooth is the multiple language support, including Python, Javascript, Go, Rust, C++, etc., for the client and server sides. The language best suited for the desired operation can be chosen using one of the SDKs among the ones provided.
Parallel transaction Execution: Parallel transaction execution is one of the most sought-after features in blockchain to enhance speed and performance. Hyperledger Sawtooth facilitates that by splitting transactions into parallel flows.

\textbf{Consensus:} The consensus algorithm provides a way to ensure mutual agreement and trust among a group of participants. As Sawtooth separates consensus from transaction semantics, the user can implement multiple types of consensus on the same blockchain. In addition, it allows the user to change the consensus algorithm on a running blockchain. 
Other notable features of Sawtooth include permissioned and permissionless architecture, compatibility with Seth Ethereum Contracts.
\subsection{Architecture}
 In Sawtooth, the modular design separates the application and core levels. Sawtooth views smart contracts as a state machine or transaction processor [3]. By creating and applying transactions, the state can be modified. One or more transactions wrapped together form a batch. A batch is the atomic unit of change in the system. Multiple batches are combined to form a block (\Figref{block}). The principal modules observed in \Figref{Sawtooth} of the Sawtooth blockchain node are,
 \noindent
\textbf{Validator:} A validator is made up of the state management, transaction handling, and interconnection network components of a Sawtooth node.

 \noindent
\textbf{Global State:} Sawtooth uses an addressable Radix Merkle tree to store data for transaction families. Like a Merkle tree the successive node hashes are stored from leaf-to-root and like a Radix tree, the addresses uniquely identify the paths to leaf nodes.

 \noindent
\textbf{Journal:} The Block Completer, Block Validator, and Chain Controller modules make up the Journal of the node. All these modules work together to produce and validate blocks in the network. The blocks under processing are stored in BlockCache and BlockStore.

 \noindent
\textbf{Transaction Scheduler:} Transaction scheduler selects the transaction for execution. Sawtooth parallel scheduler utilizes a tree structure to keep track of dependencies using a Merkle tree. Each data location has read and write lists that contain the ID of the transactions that are accessing them.

 \noindent
\textbf{Transaction Executor:} The transaction executor invokes the scheduler to select a transaction for execution either for block production or block validation. The context is collected from the context manager and submitted along with the transaction to the transaction processor by the executor.

 \noindent
\textbf{Transaction Families:} Smart contracts are called transaction families in Hyperledger Sawtooth. The logic of the smart contracts is encoded into the transaction processor. Each smart contract has its own transaction processor and client. The transaction client checks the syntax of the transaction submitted and wraps them into a batch before submitting to the rest API.

 \noindent
\textbf{Rest API:} The clients interact with Sawtooth validator through Rest API module. Clients can submit transactions to the network or read the blocks added to the chain. Verifying the validity of the requests is the responsibility of the validator of the Sawtooth node while the Rest API acts as an intermediary.

The basic flow of execution can be, in short, described as:
\begin{enumerate}
    \item The client builds transactions and batches and signs them using the user’s private key.
    \item The explicit dependencies within a batch are indicated through dependencies filed in the transaction header.
    \item The transaction travels through a journal and is included in a block.
    \item The parallel scheduler calculates implicit dependencies.
    \item The batches are submitted to the validator via REST API.
    \item The validator combines the batches into blocks through the block publisher.
    \item Block publisher gives the batches to the scheduler.
    \item The scheduler submits the transactions to the executor.
    \item The executor gives the transaction and the related context to the transaction processor.
    \item The transaction processor validates the transaction using the business logic given
    \item Then the consensus protocol is implemented to see if the valid block should be added to the chain or not.
\end{enumerate}
\begin{figure}[]
\centering
\includegraphics[scale=0.35]{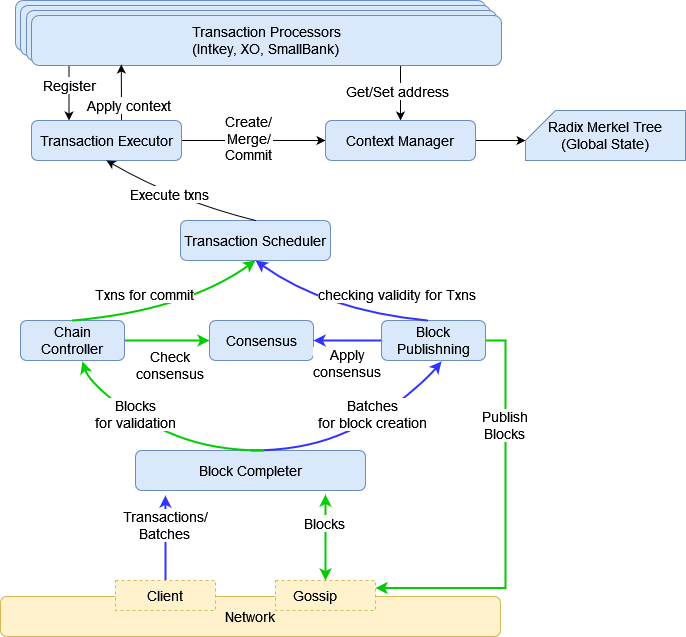}
\caption{Blockchain node architecture of Hyperledger Sawtooth}
\label{fig:Sawtooth}
\end{figure}
In \Figref{Sawtooth}, the blue path is the block production and the green path is the block validation process.
Transactions in a block in blockchains are executed by miners for block production and by validators for block validation. In most of the frameworks designed for concurrent execution, the model is the same for miners and validators. This is redundant as the analysis performed at the miners' end for concurrency is not utilized across the networks during validation. In the Optsmart framework \cite{OptSmart:DPD:2022} STMs are implemented for efficient concurrent execution and a DAG is used to represent conflicts between transactions. In this work, the use of the shared DAG created by the miner is presented. Additionally, it is discussed how it can be exploited by malicious miners to disrupt the network. A smart multi-threaded validator (SMV) \cite{parwat:pdp:2019} is proposed by Anjana et.al, which validates the DAG provided by the miner. During execution, the smart multi-threaded validator verifies the DAG shared dynamically by keeping track of transactions accessing various addresses through global and local counters. There exists a flaw in the SMV design as it is vulnerable to thread-serial execution (TSE). If fewer than the optimal number of threads are executing the transactions, the SMV will fail to identify missing edges. As the SMV identifies missing edges only when two or more transactions are accessing an address simultaneously. Assume the validators have low concurrent execution capability and did not execute the transactions in parallel. In that case, even if the transactions accessing the same address do not have edges, the SMV will not detect them. The SMV design does not account for extra edges introduced into the DAG by a malicious miner. Extra edges would result from forced serial execution of transactions. This would result in lower throughput. We present a framework with a parallel scheduler and a smart validator that overcomes the disadvantages discussed here.
\begin{figure}[]
\centering
\includegraphics[scale=0.25]{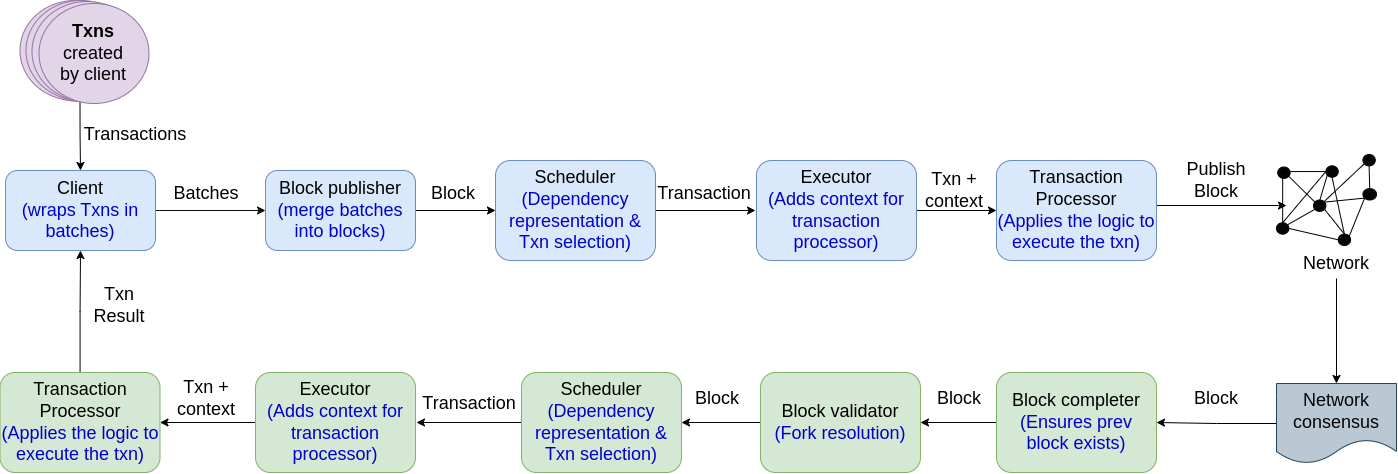}
\caption{Lifecycle of Sawtooth transaction}
\label{fig:Sawtooth}
\end{figure}

\section{Proofs}
	We aim to prove safety and liveliness of our proposed scheduler. We prove the safety by proving that the history of the execution is conflict serializable to the sequential execution of the transactions in the order they are present in the batch. The liveness of the scheduler is proved by proving that all transaction will eventually be committed.
	
	\begin{definition}[Dependency Graph]
	\label{def:DG}
		The dependency graph ($DG$) is constructed to represent the dependency information among the transactions in the block ($B$). The vertices of the graph $DG$ represent the transactions while the edges represent the dependency relation between the two transactions.
		
		$DG$ is defined as:
		
		$DG(B) = (V,E), $
		where, $\{t_1,t_2, ..., t_n\} \in V , $
		
		$(t_i, t_j) \in E \iff t_i \neq t_j \land (\exists p \in t_i) (\exists q \in t_j)$
		$(p,q) \in conf(B) \land i < j$.
		
		$p,q$ are operations in $t_i$ and $t_j$ respectively. $conf(B)$ are the read-write, write-write and write-read conflicting operations to the same address.
	\end{definition}
	\begin{lemma}
	\label{Lemma:acyclic}
		The dependency graph $DG$ is acyclic
	\end{lemma}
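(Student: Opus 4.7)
The plan is to proceed by contradiction using the strict ordering built into the definition of the edge set. The crucial observation from Definition~\ref{def:DG} is that an edge $(t_i, t_j) \in E$ is only introduced when $i < j$; in other words, every edge in $DG$ points from a transaction with a smaller block-position index to one with a strictly larger index. This total-order structure on transaction indices is what I intend to exploit, rather than any semantic property of the conflicts themselves.

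First I would assume, for the sake of contradiction, that $DG$ contains a directed cycle $t_{i_1} \rightarrow t_{i_2} \rightarrow \cdots \rightarrow t_{i_k} \rightarrow t_{i_1}$ of length $k \geq 2$. Applying the edge condition to each consecutive pair in the cycle, I get $i_1 < i_2$, $i_2 < i_3$, $\ldots$, $i_{k-1} < i_k$, and finally $i_k < i_1$ from the closing edge. By transitivity of the usual order on $\mathbb{N}$, the first $k-1$ inequalities give $i_1 < i_k$, which together with $i_k < i_1$ yields the contradiction $i_1 < i_1$.

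Since no directed cycle can exist, $DG$ is acyclic, which is exactly the claim. I expect no real obstacle here: the lemma is essentially a syntactic consequence of the $i < j$ clause in Definition~\ref{def:DG}, and the only thing worth stating carefully is the use of the natural order on the transaction indices assigned by the miner (as described in the DAG-creation procedure of Algorithm~\ref{alg:createDAG}, where the IDs $T_1, T_2, \ldots$ are fixed by the transactions' positions in the block). The entire proof should fit in a few lines and requires no case analysis on the three types of conflicts (RW, WR, WW), since the ordering argument is oblivious to which kind of conflict produced the edge.
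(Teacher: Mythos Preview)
Your proposal is correct and follows essentially the same argument as the paper's proof: both assume a cycle exists, use the $i<j$ clause in Definition~\ref{def:DG} to chain the indices along the cycle, and derive a contradiction from the transitivity of the natural order on transaction indices. Your write-up is in fact slightly cleaner in its notation and in isolating the exact contradictory inequality, but the underlying idea is identical.
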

	\begin{proof}
	 Let us assume that $DG(B)$ is cyclic.
	 
	 This means that there a path from $t_i$ to $t_i$ through $t_j$ present in $DG(B)$.
	 
	 Let the path be $t_i,t_{k_1},t_{k_2},...t_j,...t_{k_n},t_i$
	 
	 $\Rightarrow t_i<t_{k_1}<t_{k_2}<...t_j<...t_{k_n}<t_i$ (Definition:~\ref{def:DG})
	 
	  This gives a contradiction, $t_i<t_j \land t_j<t_i \land i \neq j$ 
	  
	  Thus our initial assumption that $DG(B)$ not being acyclic is not true.
	  
	  We can conclude that $DG(B)$ is acyclic.
	 
	\end{proof}

	\begin{definition}[Possible histories ($H=DG(Gen)$)]
        \label{def:PS}
	The set $H$ is the set of all the possible histories using a given dependency graph $DG$.
	
	$ \forall h \in H, h$ is a resultant history of execution using $DG(B)$
        
        Given a $DG$ for a block B, the transactions with indegree zero ($t_i$) are scheduled for execution and are flagged (marked that the txn is in execution). Once the commit of the transaction is performed the outgoing edges from $t_i$ are removed from the graph.

        \end{definition}

	\begin{lemma}
	\label{lemma:topological_sort}
		For a given $h \in H$. $h$ is the topological sort of the graph $DG$
	\end{lemma}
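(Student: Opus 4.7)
The plan is to show the two defining properties of a topological sort directly from the scheduling rule in Definition~\ref{def:PS}: (i) every transaction in $B$ appears exactly once in $h$, and (ii) whenever $(t_i,t_j)\in E$, the transaction $t_i$ precedes $t_j$ in $h$. Property (ii) is the heart of the argument and follows almost immediately from how indegrees are maintained. Property (i) is where one has to be a little careful, because it requires ruling out the possibility that the scheduler gets stuck with a nonempty residual graph in which no vertex has indegree zero; this is exactly where Lemma~\ref{Lemma:acyclic} will do its work.

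First I would fix an arbitrary $h\in H$ and let $t_j$ be any transaction appearing in $h$ together with an edge $(t_i,t_j)\in E$. By the scheduling rule, $t_j$ can only be flagged for execution at the moment its current indegree in the evolving graph is zero. Since $(t_i,t_j)$ contributes $1$ to $t_j$'s initial indegree, and this contribution is removed only when $t_i$ commits, $t_j$ cannot be scheduled before $t_i$ commits. Hence $t_i$ is committed (and so appears in $h$) strictly before $t_j$, which gives property (ii).

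For property (i) I would argue by contradiction: suppose some transaction $t\in V$ never appears in $h$, and let $V'\subseteq V$ be the set of such transactions at the end of the execution. Each vertex of $V'$ must still have positive indegree in the residual graph, because otherwise the scheduler would have eventually selected it. Therefore every vertex of $V'$ has an incoming edge from another vertex of $V'$ (edges from already-committed vertices have been removed). Following predecessors inside the finite set $V'$ forces a cycle in the induced subgraph of $DG(B)$, and hence a cycle in $DG(B)$ itself, contradicting Lemma~\ref{Lemma:acyclic}. So $V'=\emptyset$ and every transaction is committed exactly once in $h$.

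Combining (i) and (ii) gives that $h$ is a linear ordering of $V$ consistent with every edge of $DG(B)$, i.e., a topological sort. The main obstacle I anticipate is the bookkeeping in property (i): one must be precise about what ``residual indegree'' means at the moment the scheduler halts and argue cleanly that a nonempty leftover set necessarily contains an internal cycle, rather than relying on intuition about Kahn's algorithm. Everything else is a direct unwinding of Definitions~\ref{def:DG} and~\ref{def:PS}.
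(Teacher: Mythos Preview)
Your proposal is correct and follows essentially the same line as the paper's proof: both argue that an edge $(t_i,t_j)$ keeps $t_j$'s indegree positive until $t_i$ commits, so $t_i$ must precede $t_j$ in $h$. Your treatment of property~(i) is in fact more careful than the paper's, which simply asserts that every transaction appears in $h$ without spelling out the cycle-based contradiction you give; your argument there is the standard Kahn-style termination argument and is exactly the right way to fill that gap.
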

	\begin{proof}
	    Topological sorting is a linear ordering of vertices of a graph such that if an edge is present between $p$ and $q$, the vertex $p$ comes before $q$ in the sort. The graph should be acyclic and directed.

        Based on Lemma:~\ref{Lemma:acyclic}, we know that $DG$ is acyclic. From the definition of Def:~\ref{def:PS}, we know that the possible histories contain all the transaction in the $DG$ and if an edge is present from $t_i$ to $t_j$, the indegree of $t_j$ would be greater than zero and $t_j$ will not be scheduled for execution unless $t_i$ is committed.
		
		This implies that all the operation of $t_i$ occur before any operations of $t_j$. Therefore $t_i$ is present present before $t_j$ in the history $h$.
        
	\end{proof}

	\begin{definition}[Conflict graph (G)]
	Let $s$ be a schedule. the conflict graph, also known as the serialization graph, $G(s) = (V,E) $ of $s$, is defined by
	
	$V = commit(s),$
	
	$(t,t') \in E \iff t \neq t' \land (\exists p \in t ) (\exists q \in t') (p,q) \in conf(s)$
	\end{definition}

	\begin{lemma}
		The dependency graph ($DG$) for a given block is isomorphic to the conflict graph of all the schedules of set S.
		\end{lemma}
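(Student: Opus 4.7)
The plan is to exhibit the identity map on transactions as the graph isomorphism and then verify that it preserves vertices and edges in both directions. Let $\varphi: V(DG) \to V(G(s))$ be defined by $\varphi(t_i) = t_i$ for every transaction $t_i$ in block $B$, and let $s$ be an arbitrary schedule in $S$.

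First I would establish that $\varphi$ is a vertex bijection. By construction $V(DG) = \{t_1, \ldots, t_n\}$ is exactly the set of transactions of $B$. For $V(G(s)) = commit(s)$, I would invoke Lemma~\ref{lemma:topological_sort}: every $h \in H$ is a topological sort of $DG$, hence contains and commits every transaction of $B$. Thus $V(G(s)) = V(DG)$ and $\varphi$ is trivially a bijection.

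Next I would prove edge preservation in both directions. For the forward direction, suppose $(t_i, t_j) \in E(DG)$. By Definition~\ref{def:DG}, $i < j$ and there exist operations $p \in t_i$, $q \in t_j$ with $(p,q) \in conf(B)$. Since $s$ is a topological sort of $DG$, $t_i$ is committed before $t_j$ starts in $s$, so the same pair $(p,q)$ witnesses $(p,q) \in conf(s)$ with $t_i \neq t_j$, giving $(t_i, t_j) \in E(G(s))$. For the backward direction, suppose $(t, t') \in E(G(s))$. Then $t \neq t'$ and there are operations $p \in t$, $q \in t'$ with $(p,q) \in conf(s)$. Because the operation sets and addresses accessed by the transactions are the same in $B$ and in $s$, the same $(p,q)$ lies in $conf(B)$. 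Without loss of generality $t$ precedes $t'$ in $s$, and since $s$ respects the topological order of $DG$, their indices satisfy $i < j$ where $t = t_i$, $t' = t_j$. Hence $(t_i, t_j) \in E(DG)$.

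The main obstacle I anticipate is reconciling the two different notions of edge direction: in $DG$ the direction is fixed purely by the index ordering $i < j$, whereas in $G(s)$ the direction comes from the temporal order of the committed operations in the schedule. Lemma~\ref{lemma:topological_sort} is what bridges this gap, so the proof has to be careful to invoke it explicitly when orienting the edge in each direction, and to argue that $conf(B)$ and $conf(s)$ coincide on pairs of distinct committed transactions (which follows because schedules in $S$ execute the same operations on the same addresses as those recorded in $B$). Once these two observations are in place, the isomorphism $\varphi$ follows in a couple of lines.
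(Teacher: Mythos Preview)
Your approach matches the paper's: take the identity map on transactions, argue $V(DG) = V(G(s))$, and then show the edge sets coincide using the definitions together with Lemma~\ref{lemma:topological_sort}. The paper's own argument is terser (it only spells out one inclusion and cites a lemma for it), so your two-direction treatment is in fact more complete.

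There is one small logical slip in your backward direction. From ``$t$ precedes $t'$ in $s$'' and ``$s$ respects the topological order of $DG$'' you conclude $i < j$. But a topological sort constrains the relative order of two vertices only when there is an edge (or directed path) between them in $DG$; if $t_i$ and $t_j$ were incomparable in $DG$, the sort could place them in either order regardless of their indices. The fix is to first use $(p,q) \in conf(B)$ together with Definition~\ref{def:DG} to conclude that an edge between $t_i$ and $t_j$ already \emph{exists} in $DG$ (directed from the smaller index to the larger), and only then invoke the topological-sort order in $s$ to rule out $j < i$. You have all the ingredients; just reorder the last three sentences of the backward step. Also, the ``without loss of generality'' is unnecessary: the directed edge $(t,t')$ in $G(s)$ already encodes that the conflicting operation of $t$ precedes that of $t'$ in $s$.
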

		
		\begin{proof}
		    
		let $DG(B)=(V,E)$,
		
		$G(S)= (V',E')$
		
		$V=V'= \{t_1,t_2,...,t_n\}$
		
		if $ \exists (t_i,t_j) \in V' \Rightarrow (\exists p \in t_i ) (\exists q \in t_j) (p,q) \in conf(s)$
			
		$\Rightarrow (t_i,t_j) \in V$ (Lemma:~\ref{lemma:txn_order})
		
		This implies $E=E'$
		
		\end{proof}

	\begin{theorem}
	\label{theorem:CSR}
		Any history $h \in H$ is conflict equivalent to $h_0= {t_1,t_2,t_3....t_{n-1},t_n}$ i.e., . the sequential execution of ascending order of the transaction.
	\end{theorem}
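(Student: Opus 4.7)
The plan is to chain the preceding lemmas with the definition of conflict equivalence. Two histories are conflict equivalent precisely when every pair of conflicting operations from distinct committed transactions appears in the same relative order in both. So I want to argue that for any $h \in H$ and for any conflicting pair of operations $p \in t_i$ and $q \in t_j$ (with $i \neq j$), $p$ precedes $q$ in $h$ if and only if $p$ precedes $q$ in $h_0$.

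First, I would fix an arbitrary $h \in H$ and an arbitrary pair of conflicting transactions $t_i, t_j$ with $i < j$ (WLOG). By Definition~\ref{def:DG}, since $i<j$ and $(p,q) \in \mathit{conf}(B)$ for some operations $p \in t_i$ and $q \in t_j$, the edge $(t_i, t_j)$ belongs to $DG(B)$. In $h_0$, transaction $t_i$ clearly precedes $t_j$ since the ascending-order schedule lists them in index order, so every operation of $t_i$ precedes every operation of $t_j$ in $h_0$.

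Next, by Lemma~\ref{lemma:topological_sort}, $h$ is a topological sort of $DG(B)$. Since $(t_i, t_j) \in E(DG(B))$, the topological-sort property forces $t_i$ to appear before $t_j$ in $h$. Because transactions in $H$ are scheduled atomically (a transaction is flagged for execution when its indegree hits zero and only its commit allows outgoing edges to be removed, per Definition~\ref{def:PS}), every operation of $t_i$ in $h$ precedes every operation of $t_j$. Hence $p$ precedes $q$ in $h$, matching the order in $h_0$. Since $t_i, t_j$ were arbitrary, every conflicting pair of operations is ordered identically in $h$ and $h_0$, which gives conflict equivalence.

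The main obstacle I anticipate is being careful about two subtleties: (i) ensuring that the conflict relation used in CSR is genuinely the same relation used to define edges of $DG$, so that an edge in $DG$ really does capture every conflicting operation pair between those two transactions; and (ii) justifying the atomic (non-interleaved) execution of transactions from the scheduling rule in Definition~\ref{def:PS}, because without it a topological order on transactions would not immediately imply the correct order on individual operations. Once these points are nailed down, the remainder is essentially a direct application of Lemma~\ref{lemma:topological_sort} combined with the observation that $h_0$ is itself a topological sort of $DG(B)$ (since all edges go from lower to higher index).
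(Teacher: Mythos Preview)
Your proposal is correct and rests on the same underlying idea as the paper: the edges of $DG(B)$ encode precisely the conflict pairs, and every $h\in H$ respects those edges. The route you take is slightly different, however. The paper's proof appeals to the preceding (unnamed) lemma that the conflict graph $G(s)$ of any schedule $s\in S$ is isomorphic to $DG(B)$; from that it infers that all such histories share the same conflict relations and are therefore conflict equivalent to one another (and hence to $h_0$). You instead bypass the conflict-graph isomorphism and argue directly from Lemma~\ref{lemma:topological_sort}: any conflicting pair $(t_i,t_j)$ with $i<j$ yields an edge in $DG$, a topological sort forces $t_i$ before $t_j$ in $h$, and $h_0$ trivially orders them the same way. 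Your route is a bit more explicit about operation-level ordering (you rightly flag the need for atomic, non-interleaved execution from Definition~\ref{def:PS}), whereas the paper's argument is terser but leans on an extra lemma. Either path suffices; yours has the advantage of making the role of the scheduling rule and the topological-sort property transparent.
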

	\begin{proof}
	  Two histories are conflict equivalent if they have same operations and same conflict relations \cite{Gerhard+:Book:2002}.
	  
	  This is true for all elements of $h$
	  
	  Therefore all the elements in $S$ are conflict equivalent to each other.$\forall s \in S, G(s)$ is isometric to $DG(B)$.
   
	\end{proof}

	\begin{theorem}
	\label{theorem:CSR}
		All the histories in set $S$ are conflict serializable (CSR).
	\end{theorem}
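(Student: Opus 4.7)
The plan is to derive this theorem directly from the two results already established: the previous theorem stating that every history in the set is conflict equivalent to the canonical serial history $h_0 = t_1, t_2, \ldots, t_n$, together with the lemma stating that the conflict graph $G(h)$ of any $h \in S$ is isomorphic to $DG(B)$, which is known to be acyclic by Lemma~\ref{Lemma:acyclic}. Either route suffices, so the proof amounts to invoking the appropriate characterization of conflict serializability.

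First, I would recall the standard definition: a history $h$ is conflict serializable (CSR) iff there exists a serial history $h'$ over the same set of transactions such that $h$ and $h'$ are conflict equivalent; equivalently, by the Serialization Graph Theorem, $h$ is CSR iff its conflict graph $G(h)$ is acyclic. Both formulations are available from the reference \cite{Gerhard+:Book:2002}, which has already been cited in the preceding theorem.

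Next, I would fix an arbitrary $h \in S$ and present two short arguments, either of which closes the proof. The direct route: by Theorem~\ref{theorem:CSR} (the preceding one), $h$ is conflict equivalent to $h_0 = t_1, t_2, \ldots, t_n$; since $h_0$ lists the transactions one after another with no interleaving, it is by construction a serial history, so $h$ is conflict equivalent to a serial history and is therefore CSR. The graph-theoretic route: by the isomorphism lemma, $G(h)$ is isomorphic to $DG(B)$; by Lemma~\ref{Lemma:acyclic}, $DG(B)$ is acyclic; hence $G(h)$ is acyclic and $h$ is CSR. Since $h$ was arbitrary, every history in $S$ is CSR.

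There is no substantive obstacle here; the only points requiring care are notational rather than mathematical. I would reconcile the two names used in the excerpt ($H$ for possible histories generated by the scheduler, $S$ for the schedule set) by remarking that they denote the same collection of histories produced by the parallel scheduler on $DG(B)$, so the previous results apply verbatim. I would also note explicitly that $h_0$ qualifies as a serial history under the definition being used, so that the equivalence delivered by the preceding theorem indeed witnesses serializability rather than merely equivalence to some other concurrent schedule.
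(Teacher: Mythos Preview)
Your proposal is correct, and your graph-theoretic route is exactly the paper's proof: it invokes the Serialization Graph Theorem from \cite{Gerhard+:Book:2002}, uses the isomorphism lemma to identify $G(h)$ with $DG(B)$, and then appeals to Lemma~\ref{Lemma:acyclic} for acyclicity. The direct route via conflict equivalence to $h_0$ is an extra (and equally valid) argument that the paper does not include, and your remark reconciling the $H$/$S$ notation is a genuine clarification the paper leaves implicit.
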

	\begin{proof}
	 Let $h$ be a history in $H$. $h \in CSR \iff CG(h)$ is acyclic \cite{Gerhard+:Book:2002}.
	 
	 We have show that $\forall h \in H, G(h)$ is isometric to $DG(B)$.
	 
	  We can conclude that $DG(B)$ is acyclic implying $h \in CSR$
	 
	\end{proof}

\section{Results}
\subsection{SimpleWallet Transaction Family}
\textbf{Experiment 1:} 
\begin{table}[]
\begin{tabular}{|c|cccc|cccc|cccc|}
\hline
\multirow{2}{*}{\textbf{\begin{tabular}[c]{@{}c@{}}No of \\ Blocks\end{tabular}}} & \multicolumn{4}{c|}{\textbf{CP1}}                                                                                                                                                                                       & \multicolumn{4}{c|}{\textbf{CP2}}                                                                                                                                                                                       & \multicolumn{4}{c|}{\textbf{CP3}}                                                                                                                                                                                       \\ \cline{2-13} 
                                                                                  & \multicolumn{1}{c|}{\textbf{\begin{tabular}[c]{@{}c@{}}ADJ\\ DAG\end{tabular}}} & \multicolumn{1}{c|}{\textbf{Tree}} & \multicolumn{1}{c|}{\textbf{\begin{tabular}[c]{@{}c@{}}LL\\ DAG\end{tabular}}} & \textbf{Serial} & \multicolumn{1}{c|}{\textbf{\begin{tabular}[c]{@{}c@{}}ADJ\\ DAG\end{tabular}}} & \multicolumn{1}{c|}{\textbf{Tree}} & \multicolumn{1}{c|}{\textbf{\begin{tabular}[c]{@{}c@{}}LL\\ DAG\end{tabular}}} & \textbf{Serial} & \multicolumn{1}{c|}{\textbf{\begin{tabular}[c]{@{}c@{}}ADJ\\ DAG\end{tabular}}} & \multicolumn{1}{c|}{\textbf{Tree}} & \multicolumn{1}{c|}{\textbf{\begin{tabular}[c]{@{}c@{}}LL\\ DAG\end{tabular}}} & \textbf{Serial} \\ \hline
\textbf{10}                                                                       & \multicolumn{1}{c|}{139}                                                        & \multicolumn{1}{c|}{158}           & \multicolumn{1}{c|}{142}                                                       & 241             & \multicolumn{1}{c|}{175}                                                        & \multicolumn{1}{c|}{181}           & \multicolumn{1}{c|}{179}                                                       & 212             & \multicolumn{1}{c|}{173}                                                        & \multicolumn{1}{c|}{202}           & \multicolumn{1}{c|}{177}                                                       & 275             \\ \hline
\textbf{20}                                                                       & \multicolumn{1}{c|}{292}                                                        & \multicolumn{1}{c|}{331}           & \multicolumn{1}{c|}{296}                                                       & 473             & \multicolumn{1}{c|}{343}                                                        & \multicolumn{1}{c|}{377}           & \multicolumn{1}{c|}{357}                                                       & 426             & \multicolumn{1}{c|}{349}                                                        & \multicolumn{1}{c|}{405}           & \multicolumn{1}{c|}{351}                                                       & 577             \\ \hline
\textbf{30}                                                                       & \multicolumn{1}{c|}{419}                                                        & \multicolumn{1}{c|}{474}           & \multicolumn{1}{c|}{431}                                                       & 887             & \multicolumn{1}{c|}{505}                                                        & \multicolumn{1}{c|}{564}           & \multicolumn{1}{c|}{534}                                                       & 633             & \multicolumn{1}{c|}{523}                                                        & \multicolumn{1}{c|}{599}           & \multicolumn{1}{c|}{523}                                                       & 908             \\ \hline
\textbf{40}                                                                       & \multicolumn{1}{c|}{580}                                                        & \multicolumn{1}{c|}{638}           & \multicolumn{1}{c|}{583}                                                       & 1103            & \multicolumn{1}{c|}{697}                                                        & \multicolumn{1}{c|}{720}           & \multicolumn{1}{c|}{716}                                                       & 879             & \multicolumn{1}{c|}{687}                                                        & \multicolumn{1}{c|}{811}           & \multicolumn{1}{c|}{700}                                                       & 1210            \\ \hline
\textbf{50}                                                                       & \multicolumn{1}{c|}{699}                                                        & \multicolumn{1}{c|}{834}           & \multicolumn{1}{c|}{725}                                                       & 1372            & \multicolumn{1}{c|}{852}                                                        & \multicolumn{1}{c|}{950}           & \multicolumn{1}{c|}{893}                                                       & 1076            & \multicolumn{1}{c|}{859}                                                        & \multicolumn{1}{c|}{986}           & \multicolumn{1}{c|}{881}                                                       & 1463            \\ \hline
\end{tabular}
\end{table}

\noindent
\textbf{Experiment 2:}
\begin{table}[]
\begin{tabular}{|c|cccc|cccc|cccc|}
\hline
\multirow{2}{*}{\textbf{\begin{tabular}[c]{@{}c@{}}Txns\\  per \\ Block\end{tabular}}} & \multicolumn{4}{c|}{\textbf{CP1}}                                                                                                                                                                                       & \multicolumn{4}{c|}{\textbf{CP2}}                                                                                                                                                                                       & \multicolumn{4}{c|}{\textbf{CP3}}                                                                                                                                                                                       \\ \cline{2-13} 
                                                                                       & \multicolumn{1}{c|}{\textbf{\begin{tabular}[c]{@{}c@{}}ADJ\\ DAG\end{tabular}}} & \multicolumn{1}{c|}{\textbf{Tree}} & \multicolumn{1}{c|}{\textbf{\begin{tabular}[c]{@{}c@{}}LL\\ DAG\end{tabular}}} & \textbf{Serial} & \multicolumn{1}{c|}{\textbf{\begin{tabular}[c]{@{}c@{}}ADJ\\ DAG\end{tabular}}} & \multicolumn{1}{c|}{\textbf{Tree}} & \multicolumn{1}{c|}{\textbf{\begin{tabular}[c]{@{}c@{}}LL\\ DAG\end{tabular}}} & \textbf{Serial} & \multicolumn{1}{c|}{\textbf{\begin{tabular}[c]{@{}c@{}}ADJ\\ DAG\end{tabular}}} & \multicolumn{1}{c|}{\textbf{Tree}} & \multicolumn{1}{c|}{\textbf{\begin{tabular}[c]{@{}c@{}}LL\\ DAG\end{tabular}}} & \textbf{Serial} \\ \hline
\textbf{200}                                                                           & \multicolumn{1}{c|}{54}                                                         & \multicolumn{1}{c|}{56}            & \multicolumn{1}{c|}{54}                                                        & 143             & \multicolumn{1}{c|}{65}                                                         & \multicolumn{1}{c|}{68}            & \multicolumn{1}{c|}{59}                                                        & 73              & \multicolumn{1}{c|}{67}                                                         & \multicolumn{1}{c|}{70}            & \multicolumn{1}{c|}{67}                                                        & 75              \\ \hline
\textbf{400}                                                                           & \multicolumn{1}{c|}{113}                                                        & \multicolumn{1}{c|}{121}           & \multicolumn{1}{c|}{96}                                                        & 181             & \multicolumn{1}{c|}{134}                                                        & \multicolumn{1}{c|}{143}           & \multicolumn{1}{c|}{124}                                                       & 153             & \multicolumn{1}{c|}{135}                                                        & \multicolumn{1}{c|}{142}           & \multicolumn{1}{c|}{134}                                                       & 156             \\ \hline
\textbf{600}                                                                           & \multicolumn{1}{c|}{163}                                                        & \multicolumn{1}{c|}{178}           & \multicolumn{1}{c|}{147}                                                       & 248             & \multicolumn{1}{c|}{205}                                                        & \multicolumn{1}{c|}{213}           & \multicolumn{1}{c|}{185}                                                       & 253             & \multicolumn{1}{c|}{200}                                                        & \multicolumn{1}{c|}{221}           & \multicolumn{1}{c|}{185}                                                       & 254             \\ \hline
\textbf{800}                                                                           & \multicolumn{1}{c|}{201}                                                        & \multicolumn{1}{c|}{253}           & \multicolumn{1}{c|}{195}                                                       & 282             & \multicolumn{1}{c|}{259}                                                        & \multicolumn{1}{c|}{290}           & \multicolumn{1}{c|}{244}                                                       & 318             & \multicolumn{1}{c|}{263}                                                        & \multicolumn{1}{c|}{303}           & \multicolumn{1}{c|}{250}                                                       & 326             \\ \hline
\textbf{1000}                                                                          & \multicolumn{1}{c|}{279}                                                        & \multicolumn{1}{c|}{332}           & \multicolumn{1}{c|}{252}                                                       & 375             & \multicolumn{1}{c|}{343}                                                        & \multicolumn{1}{c|}{366}           & \multicolumn{1}{c|}{313}                                                       & 413             & \multicolumn{1}{c|}{352}                                                        & \multicolumn{1}{c|}{405}           & \multicolumn{1}{c|}{329}                                                       & 426             \\ \hline
\textbf{1200}                                                                          & \multicolumn{1}{c|}{342}                                                        & \multicolumn{1}{c|}{403}           & \multicolumn{1}{c|}{323}                                                       & 493             & \multicolumn{1}{c|}{424}                                                        & \multicolumn{1}{c|}{467}           & \multicolumn{1}{c|}{430}                                                       & 505             & \multicolumn{1}{c|}{420}                                                        & \multicolumn{1}{c|}{503}           & \multicolumn{1}{c|}{413}                                                       & 538             \\ \hline
\end{tabular}
\end{table}

\noindent
\textbf{Experiment 3:} 
\begin{table}[]
\begin{tabular}{|c|cccc|cccc|cccc|}
\hline
\multirow{2}{*}{\textbf{\begin{tabular}[c]{@{}c@{}}Dependency\\ percentage\end{tabular}}} & \multicolumn{4}{c|}{\textbf{CP1}}                                                                                                                                                                                       & \multicolumn{4}{c|}{\textbf{CP2}}                                                                                                                                                                                       & \multicolumn{4}{c|}{\textbf{CP3}}                                                                                                                                                                                       \\ \cline{2-13} 
                                                                                          & \multicolumn{1}{c|}{\textbf{\begin{tabular}[c]{@{}c@{}}ADJ\\ DAG\end{tabular}}} & \multicolumn{1}{c|}{\textbf{Tree}} & \multicolumn{1}{c|}{\textbf{\begin{tabular}[c]{@{}c@{}}LL\\ DAG\end{tabular}}} & \textbf{Serial} & \multicolumn{1}{c|}{\textbf{\begin{tabular}[c]{@{}c@{}}ADJ\\ DAG\end{tabular}}} & \multicolumn{1}{c|}{\textbf{Tree}} & \multicolumn{1}{c|}{\textbf{\begin{tabular}[c]{@{}c@{}}LL\\ DAG\end{tabular}}} & \textbf{Serial} & \multicolumn{1}{c|}{\textbf{\begin{tabular}[c]{@{}c@{}}ADJ\\ DAG\end{tabular}}} & \multicolumn{1}{c|}{\textbf{Tree}} & \multicolumn{1}{c|}{\textbf{\begin{tabular}[c]{@{}c@{}}LL\\ DAG\end{tabular}}} & \textbf{Serial} \\ \hline
\textbf{0}                                                                                & \multicolumn{1}{c|}{372}                                                        & \multicolumn{1}{c|}{378}           & \multicolumn{1}{c|}{350}                                                       & 712             & \multicolumn{1}{c|}{364}                                                        & \multicolumn{1}{c|}{354}           & \multicolumn{1}{c|}{341}                                                       & 773             & \multicolumn{1}{c|}{363}                                                        & \multicolumn{1}{c|}{358}           & \multicolumn{1}{c|}{347}                                                       & 774             \\ \hline
\textbf{20}                                                                               & \multicolumn{1}{c|}{354}                                                        & \multicolumn{1}{c|}{393}           & \multicolumn{1}{c|}{335}                                                       & 786             & \multicolumn{1}{c|}{364}                                                        & \multicolumn{1}{c|}{438}           & \multicolumn{1}{c|}{356}                                                       & 767             & \multicolumn{1}{c|}{355}                                                        & \multicolumn{1}{c|}{482}           & \multicolumn{1}{c|}{322}                                                       & 726             \\ \hline
\textbf{40}                                                                               & \multicolumn{1}{c|}{365}                                                        & \multicolumn{1}{c|}{454}           & \multicolumn{1}{c|}{347}                                                       & 717             & \multicolumn{1}{c|}{361}                                                        & \multicolumn{1}{c|}{489}           & \multicolumn{1}{c|}{350}                                                       & 758             & \multicolumn{1}{c|}{359}                                                        & \multicolumn{1}{c|}{500}           & \multicolumn{1}{c|}{322}                                                       & 763             \\ \hline
\textbf{60}                                                                               & \multicolumn{1}{c|}{348}                                                        & \multicolumn{1}{c|}{501}           & \multicolumn{1}{c|}{351}                                                       & 742             & \multicolumn{1}{c|}{360}                                                        & \multicolumn{1}{c|}{513}           & \multicolumn{1}{c|}{344}                                                       & 760             & \multicolumn{1}{c|}{362}                                                        & \multicolumn{1}{c|}{503}           & \multicolumn{1}{c|}{327}                                                       & 748             \\ \hline
\textbf{80}                                                                               & \multicolumn{1}{c|}{354}                                                        & \multicolumn{1}{c|}{579}           & \multicolumn{1}{c|}{350}                                                       & 767             & \multicolumn{1}{c|}{350}                                                        & \multicolumn{1}{c|}{545}           & \multicolumn{1}{c|}{341}                                                       & 779             & \multicolumn{1}{c|}{352}                                                        & \multicolumn{1}{c|}{523}           & \multicolumn{1}{c|}{329}                                                       & 745             \\ \hline
\textbf{100}                                                                              & \multicolumn{1}{c|}{349}                                                        & \multicolumn{1}{c|}{624}           & \multicolumn{1}{c|}{336}                                                       & 708             & \multicolumn{1}{c|}{354}                                                        & \multicolumn{1}{c|}{571}           & \multicolumn{1}{c|}{382}                                                       & 726             & \multicolumn{1}{c|}{354}                                                        & \multicolumn{1}{c|}{664}           & \multicolumn{1}{c|}{395}                                                       & 755             \\ \hline
\end{tabular}
\end{table}

\newpage
\subsection{Intkey Transaction Family}
\textbf{Experiment 1:}
\begin{table}[]
\begin{tabular}{|c|cccc|cccc|cccc|}
\hline
\multirow{2}{*}{\textbf{\begin{tabular}[c]{@{}c@{}}No of \\ Blocks\end{tabular}}} & \multicolumn{4}{c|}{\textbf{CP1}}                                                                                                                                                                                       & \multicolumn{4}{c|}{\textbf{CP2}}                                                                                                                                                                                       & \multicolumn{4}{c|}{\textbf{CP3}}                                                                                                                                                                                       \\ \cline{2-13} 
                                                                                  & \multicolumn{1}{c|}{\textbf{\begin{tabular}[c]{@{}c@{}}ADJ\\ DAG\end{tabular}}} & \multicolumn{1}{c|}{\textbf{Tree}} & \multicolumn{1}{c|}{\textbf{\begin{tabular}[c]{@{}c@{}}LL\\ DAG\end{tabular}}} & \textbf{Serial} & \multicolumn{1}{c|}{\textbf{\begin{tabular}[c]{@{}c@{}}ADJ\\ DAG\end{tabular}}} & \multicolumn{1}{c|}{\textbf{Tree}} & \multicolumn{1}{c|}{\textbf{\begin{tabular}[c]{@{}c@{}}LL\\ DAG\end{tabular}}} & \textbf{Serial} & \multicolumn{1}{c|}{\textbf{\begin{tabular}[c]{@{}c@{}}ADJ\\ DAG\end{tabular}}} & \multicolumn{1}{c|}{\textbf{Tree}} & \multicolumn{1}{c|}{\textbf{\begin{tabular}[c]{@{}c@{}}LL\\ DAG\end{tabular}}} & \textbf{Serial} \\ \hline
\textbf{10}                                                                       & \multicolumn{1}{c|}{115}                                                        & \multicolumn{1}{c|}{126}           & \multicolumn{1}{c|}{118}                                                       & 200             & \multicolumn{1}{c|}{112}                                                        & \multicolumn{1}{c|}{135}           & \multicolumn{1}{c|}{117}                                                       & 138             & \multicolumn{1}{c|}{111}                                                        & \multicolumn{1}{c|}{133}           & \multicolumn{1}{c|}{115}                                                       & 132             \\ \hline
\textbf{20}                                                                       & \multicolumn{1}{c|}{231}                                                        & \multicolumn{1}{c|}{252}           & \multicolumn{1}{c|}{238}                                                       & 402             & \multicolumn{1}{c|}{224}                                                        & \multicolumn{1}{c|}{266}           & \multicolumn{1}{c|}{231}                                                       & 274             & \multicolumn{1}{c|}{223}                                                        & \multicolumn{1}{c|}{262}           & \multicolumn{1}{c|}{230}                                                       & 269             \\ \hline
\textbf{30}                                                                       & \multicolumn{1}{c|}{346}                                                        & \multicolumn{1}{c|}{374}           & \multicolumn{1}{c|}{358}                                                       & 597             & \multicolumn{1}{c|}{336}                                                        & \multicolumn{1}{c|}{394}           & \multicolumn{1}{c|}{346}                                                       & 412             & \multicolumn{1}{c|}{333}                                                        & \multicolumn{1}{c|}{391}           & \multicolumn{1}{c|}{349}                                                       & 398             \\ \hline
\textbf{40}                                                                       & \multicolumn{1}{c|}{459}                                                        & \multicolumn{1}{c|}{497}           & \multicolumn{1}{c|}{473}                                                       & 799             & \multicolumn{1}{c|}{448}                                                        & \multicolumn{1}{c|}{540}           & \multicolumn{1}{c|}{470}                                                       & 555             & \multicolumn{1}{c|}{446}                                                        & \multicolumn{1}{c|}{523}           & \multicolumn{1}{c|}{461}                                                       & 535             \\ \hline
\textbf{50}                                                                       & \multicolumn{1}{c|}{573}                                                        & \multicolumn{1}{c|}{631}           & \multicolumn{1}{c|}{597}                                                       & 1009            & \multicolumn{1}{c|}{562}                                                        & \multicolumn{1}{c|}{671}           & \multicolumn{1}{c|}{583}                                                       & 690             & \multicolumn{1}{c|}{551}                                                        & \multicolumn{1}{c|}{656}           & \multicolumn{1}{c|}{577}                                                       & 668             \\ \hline
\end{tabular}
\end{table}

\noindent
\textbf{Experiment 2:}
\begin{table}[]
\begin{tabular}{|c|cccc|cccc|cccc|}
\hline
\multirow{2}{*}{\textbf{\begin{tabular}[c]{@{}c@{}}Txns\\  per \\ Block\end{tabular}}} & \multicolumn{4}{c|}{\textbf{CP1}}                                                                                                                                                                                       & \multicolumn{4}{c|}{\textbf{CP2}}                                                                                                                                                                                       & \multicolumn{4}{c|}{\textbf{CP3}}                                                                                                                                                                                       \\ \cline{2-13} 
                                                                                       & \multicolumn{1}{c|}{\textbf{\begin{tabular}[c]{@{}c@{}}ADJ\\ DAG\end{tabular}}} & \multicolumn{1}{c|}{\textbf{Tree}} & \multicolumn{1}{c|}{\textbf{\begin{tabular}[c]{@{}c@{}}LL\\ DAG\end{tabular}}} & \textbf{Serial} & \multicolumn{1}{c|}{\textbf{\begin{tabular}[c]{@{}c@{}}ADJ\\ DAG\end{tabular}}} & \multicolumn{1}{c|}{\textbf{Tree}} & \multicolumn{1}{c|}{\textbf{\begin{tabular}[c]{@{}c@{}}LL\\ DAG\end{tabular}}} & \textbf{Serial} & \multicolumn{1}{c|}{\textbf{\begin{tabular}[c]{@{}c@{}}ADJ\\ DAG\end{tabular}}} & \multicolumn{1}{c|}{\textbf{Tree}} & \multicolumn{1}{c|}{\textbf{\begin{tabular}[c]{@{}c@{}}LL\\ DAG\end{tabular}}} & \textbf{Serial} \\ \hline
\textbf{200}                                                                           & \multicolumn{1}{c|}{39}                                                         & \multicolumn{1}{c|}{41}            & \multicolumn{1}{c|}{39}                                                        & 54              & \multicolumn{1}{c|}{39}                                                         & \multicolumn{1}{c|}{42}            & \multicolumn{1}{c|}{39}                                                        & 47              & \multicolumn{1}{c|}{39}                                                         & \multicolumn{1}{c|}{41}            & \multicolumn{1}{c|}{39}                                                        & 48              \\ \hline
\textbf{400}                                                                           & \multicolumn{1}{c|}{83}                                                         & \multicolumn{1}{c|}{84}            & \multicolumn{1}{c|}{87}                                                        & 121             & \multicolumn{1}{c|}{81}                                                         & \multicolumn{1}{c|}{94}            & \multicolumn{1}{c|}{84}                                                        & 106             & \multicolumn{1}{c|}{82}                                                         & \multicolumn{1}{c|}{91}            & \multicolumn{1}{c|}{89}                                                        & 96              \\ \hline
\textbf{600}                                                                           & \multicolumn{1}{c|}{127}                                                        & \multicolumn{1}{c|}{138}           & \multicolumn{1}{c|}{134}                                                       & 198             & \multicolumn{1}{c|}{122}                                                        & \multicolumn{1}{c|}{147}           & \multicolumn{1}{c|}{131}                                                       & 163             & \multicolumn{1}{c|}{126}                                                        & \multicolumn{1}{c|}{146}           & \multicolumn{1}{c|}{130}                                                       & 172             \\ \hline
\textbf{800}                                                                           & \multicolumn{1}{c|}{181}                                                        & \multicolumn{1}{c|}{191}           & \multicolumn{1}{c|}{188}                                                       & 281             & \multicolumn{1}{c|}{172}                                                        & \multicolumn{1}{c|}{205}           & \multicolumn{1}{c|}{181}                                                       & 214             & \multicolumn{1}{c|}{172}                                                        & \multicolumn{1}{c|}{203}           & \multicolumn{1}{c|}{179}                                                       & 201             \\ \hline
\textbf{1000}                                                                          & \multicolumn{1}{c|}{217}                                                        & \multicolumn{1}{c|}{248}           & \multicolumn{1}{c|}{227}                                                       & 395             & \multicolumn{1}{c|}{221}                                                        & \multicolumn{1}{c|}{261}           & \multicolumn{1}{c|}{232}                                                       & 274             & \multicolumn{1}{c|}{224}                                                        & \multicolumn{1}{c|}{255}           & \multicolumn{1}{c|}{228}                                                       & 259             \\ \hline
\textbf{1200}                                                                          & \multicolumn{1}{c|}{268}                                                        & \multicolumn{1}{c|}{302}           & \multicolumn{1}{c|}{281}                                                       & 573             & \multicolumn{1}{c|}{280}                                                        & \multicolumn{1}{c|}{327}           & \multicolumn{1}{c|}{285}                                                       & 345             & \multicolumn{1}{c|}{229}                                                        & \multicolumn{1}{c|}{264}           & \multicolumn{1}{c|}{237}                                                       & 263             \\ \hline
\end{tabular}
\end{table}

\noindent
\textbf{Experiment 3:} 
\begin{table}[]
\begin{tabular}{|c|cccc|cccc|cccc|}
\hline
\multirow{2}{*}{\textbf{\begin{tabular}[c]{@{}c@{}}Dependency\\ percentage\end{tabular}}} & \multicolumn{4}{c|}{\textbf{CP1}}                                                                                                                                                                                       & \multicolumn{4}{c|}{\textbf{CP2}}                                                                                                                                                                                       & \multicolumn{4}{c|}{\textbf{CP3}}                                                                                                                                                                                       \\ \cline{2-13} 
                                                                                          & \multicolumn{1}{c|}{\textbf{\begin{tabular}[c]{@{}c@{}}ADJ\\ DAG\end{tabular}}} & \multicolumn{1}{c|}{\textbf{Tree}} & \multicolumn{1}{c|}{\textbf{\begin{tabular}[c]{@{}c@{}}LL\\ DAG\end{tabular}}} & \textbf{Serial} & \multicolumn{1}{c|}{\textbf{\begin{tabular}[c]{@{}c@{}}ADJ\\ DAG\end{tabular}}} & \multicolumn{1}{c|}{\textbf{Tree}} & \multicolumn{1}{c|}{\textbf{\begin{tabular}[c]{@{}c@{}}LL\\ DAG\end{tabular}}} & \textbf{Serial} & \multicolumn{1}{c|}{\textbf{\begin{tabular}[c]{@{}c@{}}ADJ\\ DAG\end{tabular}}} & \multicolumn{1}{c|}{\textbf{Tree}} & \multicolumn{1}{c|}{\textbf{\begin{tabular}[c]{@{}c@{}}LL\\ DAG\end{tabular}}} & \textbf{Serial} \\ \hline
\textbf{0}                                                                                & \multicolumn{1}{c|}{224}                                                        & \multicolumn{1}{c|}{226}           & \multicolumn{1}{c|}{228}                                                       & 517             & \multicolumn{1}{c|}{226}                                                        & \multicolumn{1}{c|}{228}           & \multicolumn{1}{c|}{227}                                                       & 504             & \multicolumn{1}{c|}{227}                                                        & \multicolumn{1}{c|}{239}           & \multicolumn{1}{c|}{227}                                                       & 246             \\ \hline
\textbf{20}                                                                               & \multicolumn{1}{c|}{227}                                                        & \multicolumn{1}{c|}{242}           & \multicolumn{1}{c|}{231}                                                       & 436             & \multicolumn{1}{c|}{228}                                                        & \multicolumn{1}{c|}{249}           & \multicolumn{1}{c|}{229}                                                       & 352             & \multicolumn{1}{c|}{229}                                                        & \multicolumn{1}{c|}{242}           & \multicolumn{1}{c|}{231}                                                       & 322             \\ \hline
\textbf{40}                                                                               & \multicolumn{1}{c|}{229}                                                        & \multicolumn{1}{c|}{244}           & \multicolumn{1}{c|}{232}                                                       & 424             & \multicolumn{1}{c|}{229}                                                        & \multicolumn{1}{c|}{253}           & \multicolumn{1}{c|}{230}                                                       & 318             & \multicolumn{1}{c|}{230}                                                        & \multicolumn{1}{c|}{245}           & \multicolumn{1}{c|}{233}                                                       & 346             \\ \hline
\textbf{60}                                                                               & \multicolumn{1}{c|}{230}                                                        & \multicolumn{1}{c|}{249}           & \multicolumn{1}{c|}{235}                                                       & 383             & \multicolumn{1}{c|}{231}                                                        & \multicolumn{1}{c|}{264}           & \multicolumn{1}{c|}{233}                                                       & 274             & \multicolumn{1}{c|}{235}                                                        & \multicolumn{1}{c|}{247}           & \multicolumn{1}{c|}{236}                                                       & 382             \\ \hline
\textbf{80}                                                                               & \multicolumn{1}{c|}{233}                                                        & \multicolumn{1}{c|}{252}           & \multicolumn{1}{c|}{237}                                                       & 333             & \multicolumn{1}{c|}{233}                                                        & \multicolumn{1}{c|}{272}           & \multicolumn{1}{c|}{235}                                                       & 264             & \multicolumn{1}{c|}{238}                                                        & \multicolumn{1}{c|}{251}           & \multicolumn{1}{c|}{239}                                                       & 403             \\ \hline
\textbf{100}                                                                              & \multicolumn{1}{c|}{236}                                                        & \multicolumn{1}{c|}{254}           & \multicolumn{1}{c|}{250}                                                       & 315             & \multicolumn{1}{c|}{234}                                                        & \multicolumn{1}{c|}{279}           & \multicolumn{1}{c|}{245}                                                       & 249             & \multicolumn{1}{c|}{240}                                                        & \multicolumn{1}{c|}{254}           & \multicolumn{1}{c|}{244}                                                       & 449             \\ \hline
\end{tabular}
\end{table}

\newpage
\subsection{Voting Transaction Family}
\textbf{Experiment 1:}
\begin{table}[]
\begin{tabular}{|c|cccc|cccc|cccc|}
\hline
\multirow{2}{*}{\textbf{\begin{tabular}[c]{@{}c@{}}No of \\ Blocks\end{tabular}}} & \multicolumn{4}{c|}{\textbf{CP1}}                                                                                                                                                                                       & \multicolumn{4}{c|}{\textbf{CP2}}                                                                                                                                                                                       & \multicolumn{4}{c|}{\textbf{CP3}}                                                                                                                                                                                       \\ \cline{2-13} 
                                                                                  & \multicolumn{1}{c|}{\textbf{\begin{tabular}[c]{@{}c@{}}ADJ\\ DAG\end{tabular}}} & \multicolumn{1}{c|}{\textbf{Tree}} & \multicolumn{1}{c|}{\textbf{\begin{tabular}[c]{@{}c@{}}LL\\ DAG\end{tabular}}} & \textbf{Serial} & \multicolumn{1}{c|}{\textbf{\begin{tabular}[c]{@{}c@{}}ADJ\\ DAG\end{tabular}}} & \multicolumn{1}{c|}{\textbf{Tree}} & \multicolumn{1}{c|}{\textbf{\begin{tabular}[c]{@{}c@{}}LL\\ DAG\end{tabular}}} & \textbf{Serial} & \multicolumn{1}{c|}{\textbf{\begin{tabular}[c]{@{}c@{}}ADJ\\ DAG\end{tabular}}} & \multicolumn{1}{c|}{\textbf{Tree}} & \multicolumn{1}{c|}{\textbf{\begin{tabular}[c]{@{}c@{}}LL\\ DAG\end{tabular}}} & \textbf{Serial} \\ \hline
\textbf{10}                                                                       & \multicolumn{1}{c|}{81}                                                         & \multicolumn{1}{c|}{160}           & \multicolumn{1}{c|}{116}                                                       & 128             & \multicolumn{1}{c|}{81}                                                         & \multicolumn{1}{c|}{187}           & \multicolumn{1}{c|}{128}                                                       & 151             & \multicolumn{1}{c|}{80}                                                         & \multicolumn{1}{c|}{183}           & \multicolumn{1}{c|}{122}                                                       & 154             \\ \hline
\textbf{20}                                                                       & \multicolumn{1}{c|}{162}                                                        & \multicolumn{1}{c|}{318}           & \multicolumn{1}{c|}{247}                                                       & 254             & \multicolumn{1}{c|}{162}                                                        & \multicolumn{1}{c|}{368}           & \multicolumn{1}{c|}{255}                                                       & 291             & \multicolumn{1}{c|}{161}                                                        & \multicolumn{1}{c|}{352}           & \multicolumn{1}{c|}{239}                                                       & 312             \\ \hline
\textbf{30}                                                                       & \multicolumn{1}{c|}{241}                                                        & \multicolumn{1}{c|}{469}           & \multicolumn{1}{c|}{366}                                                       & 382             & \multicolumn{1}{c|}{239}                                                        & \multicolumn{1}{c|}{556}           & \multicolumn{1}{c|}{386}                                                       & 447             & \multicolumn{1}{c|}{243}                                                        & \multicolumn{1}{c|}{535}           & \multicolumn{1}{c|}{364}                                                       & 458             \\ \hline
\textbf{40}                                                                       & \multicolumn{1}{c|}{323}                                                        & \multicolumn{1}{c|}{646}           & \multicolumn{1}{c|}{485}                                                       & 511             & \multicolumn{1}{c|}{317}                                                        & \multicolumn{1}{c|}{749}           & \multicolumn{1}{c|}{509}                                                       & 596             & \multicolumn{1}{c|}{323}                                                        & \multicolumn{1}{c|}{746}           & \multicolumn{1}{c|}{478}                                                       & 616             \\ \hline
\textbf{50}                                                                       & \multicolumn{1}{c|}{404}                                                        & \multicolumn{1}{c|}{797}           & \multicolumn{1}{c|}{607}                                                       & 623             & \multicolumn{1}{c|}{396}                                                        & \multicolumn{1}{c|}{933}           & \multicolumn{1}{c|}{633}                                                       & 728             & \multicolumn{1}{c|}{399}                                                        & \multicolumn{1}{c|}{937}           & \multicolumn{1}{c|}{604}                                                       & 776             \\ \hline
\end{tabular}
\end{table}

\noindent
\textbf{Experiment 2:}
\begin{table}[]
\begin{tabular}{|c|cccc|cccc|cccc|}
\hline
\multirow{2}{*}{\textbf{\begin{tabular}[c]{@{}c@{}}Txns\\  per \\ Block\end{tabular}}} & \multicolumn{4}{c|}{\textbf{CP1}}                                                                                                                                                                                       & \multicolumn{4}{c|}{\textbf{CP2}}                                                                                                                                                                                       & \multicolumn{4}{c|}{\textbf{CP3}}                                                                                                                                                                                       \\ \cline{2-13} 
                                                                                       & \multicolumn{1}{c|}{\textbf{\begin{tabular}[c]{@{}c@{}}ADJ\\ DAG\end{tabular}}} & \multicolumn{1}{c|}{\textbf{Tree}} & \multicolumn{1}{c|}{\textbf{\begin{tabular}[c]{@{}c@{}}LL\\ DAG\end{tabular}}} & \textbf{Serial} & \multicolumn{1}{c|}{\textbf{\begin{tabular}[c]{@{}c@{}}ADJ\\ DAG\end{tabular}}} & \multicolumn{1}{c|}{\textbf{Tree}} & \multicolumn{1}{c|}{\textbf{\begin{tabular}[c]{@{}c@{}}LL\\ DAG\end{tabular}}} & \textbf{Serial} & \multicolumn{1}{c|}{\textbf{\begin{tabular}[c]{@{}c@{}}ADJ\\ DAG\end{tabular}}} & \multicolumn{1}{c|}{\textbf{Tree}} & \multicolumn{1}{c|}{\textbf{\begin{tabular}[c]{@{}c@{}}LL\\ DAG\end{tabular}}} & \textbf{Serial} \\ \hline
\textbf{200}                                                                           & \multicolumn{1}{c|}{27}                                                         & \multicolumn{1}{c|}{46}            & \multicolumn{1}{c|}{19}                                                        & 41              & \multicolumn{1}{c|}{27}                                                         & \multicolumn{1}{c|}{44}            & \multicolumn{1}{c|}{20}                                                        & 40              & \multicolumn{1}{c|}{28}                                                         & \multicolumn{1}{c|}{45}            & \multicolumn{1}{c|}{20}                                                        & 40              \\ \hline
\textbf{400}                                                                           & \multicolumn{1}{c|}{59}                                                         & \multicolumn{1}{c|}{102}           & \multicolumn{1}{c|}{42}                                                        & 90              & \multicolumn{1}{c|}{56}                                                         & \multicolumn{1}{c|}{102}           & \multicolumn{1}{c|}{45}                                                        & 87              & \multicolumn{1}{c|}{59}                                                         & \multicolumn{1}{c|}{103}           & \multicolumn{1}{c|}{44}                                                        & 88              \\ \hline
\textbf{600}                                                                           & \multicolumn{1}{c|}{90}                                                         & \multicolumn{1}{c|}{166}           & \multicolumn{1}{c|}{74}                                                        & 139             & \multicolumn{1}{c|}{91}                                                         & \multicolumn{1}{c|}{175}           & \multicolumn{1}{c|}{85}                                                        & 144             & \multicolumn{1}{c|}{91}                                                         & \multicolumn{1}{c|}{178}           & \multicolumn{1}{c|}{80}                                                        & 148             \\ \hline
\textbf{800}                                                                           & \multicolumn{1}{c|}{126}                                                        & \multicolumn{1}{c|}{236}           & \multicolumn{1}{c|}{131}                                                       & 189             & \multicolumn{1}{c|}{124}                                                        & \multicolumn{1}{c|}{265}           & \multicolumn{1}{c|}{149}                                                       & 212             & \multicolumn{1}{c|}{121}                                                        & \multicolumn{1}{c|}{265}           & \multicolumn{1}{c|}{140}                                                       & 224             \\ \hline
\textbf{1000}                                                                          & \multicolumn{1}{c|}{160}                                                        & \multicolumn{1}{c|}{317}           & \multicolumn{1}{c|}{212}                                                       & 250             & \multicolumn{1}{c|}{160}                                                        & \multicolumn{1}{c|}{370}           & \multicolumn{1}{c|}{246}                                                       & 299             & \multicolumn{1}{c|}{157}                                                        & \multicolumn{1}{c|}{345}           & \multicolumn{1}{c|}{230}                                                       & 307             \\ \hline
\textbf{1200}                                                                          & \multicolumn{1}{c|}{199}                                                        & \multicolumn{1}{c|}{414}           & \multicolumn{1}{c|}{351}                                                       & 400             & \multicolumn{1}{c|}{200}                                                        & \multicolumn{1}{c|}{502}           & \multicolumn{1}{c|}{390}                                                       & 409             & \multicolumn{1}{c|}{196}                                                        & \multicolumn{1}{c|}{488}           & \multicolumn{1}{c|}{359}                                                       & 399             \\ \hline
\end{tabular}
\end{table}

\noindent
\textbf{Experiment 3:} 
\begin{table}[]
\begin{tabular}{|c|cccc|cccc|cccc|}
\hline
\multirow{2}{*}{\textbf{\begin{tabular}[c]{@{}c@{}}Dependency\\ percentage\end{tabular}}} & \multicolumn{4}{c|}{\textbf{CP1}}                                                                                                                                                                                       & \multicolumn{4}{c|}{\textbf{CP2}}                                                                                                                                                                                       & \multicolumn{4}{c|}{\textbf{CP3}}                                                                                                                                                                                       \\ \cline{2-13} 
                                                                                          & \multicolumn{1}{c|}{\textbf{\begin{tabular}[c]{@{}c@{}}ADJ\\ DAG\end{tabular}}} & \multicolumn{1}{c|}{\textbf{Tree}} & \multicolumn{1}{c|}{\textbf{\begin{tabular}[c]{@{}c@{}}LL\\ DAG\end{tabular}}} & \textbf{Serial} & \multicolumn{1}{c|}{\textbf{\begin{tabular}[c]{@{}c@{}}ADJ\\ DAG\end{tabular}}} & \multicolumn{1}{c|}{\textbf{Tree}} & \multicolumn{1}{c|}{\textbf{\begin{tabular}[c]{@{}c@{}}LL\\ DAG\end{tabular}}} & \textbf{Serial} & \multicolumn{1}{c|}{\textbf{\begin{tabular}[c]{@{}c@{}}ADJ\\ DAG\end{tabular}}} & \multicolumn{1}{c|}{\textbf{Tree}} & \multicolumn{1}{c|}{\textbf{\begin{tabular}[c]{@{}c@{}}LL\\ DAG\end{tabular}}} & \textbf{Serial} \\ \hline
\textbf{0}                                                                                & \multicolumn{1}{c|}{157}                                                        & \multicolumn{1}{c|}{460}           & \multicolumn{1}{c|}{230}                                                       & 408             & \multicolumn{1}{c|}{155}                                                        & \multicolumn{1}{c|}{457}           & \multicolumn{1}{c|}{234}                                                       & 408             & \multicolumn{1}{c|}{156}                                                        & \multicolumn{1}{c|}{466}           & \multicolumn{1}{c|}{230}                                                       & 407             \\ \hline
\textbf{20}                                                                               & \multicolumn{1}{c|}{159}                                                        & \multicolumn{1}{c|}{456}           & \multicolumn{1}{c|}{233}                                                       & 398             & \multicolumn{1}{c|}{157}                                                        & \multicolumn{1}{c|}{427}           & \multicolumn{1}{c|}{240}                                                       & 365             & \multicolumn{1}{c|}{158}                                                        & \multicolumn{1}{c|}{441}           & \multicolumn{1}{c|}{230}                                                       & 392             \\ \hline
\textbf{40}                                                                               & \multicolumn{1}{c|}{160}                                                        & \multicolumn{1}{c|}{434}           & \multicolumn{1}{c|}{233}                                                       & 372             & \multicolumn{1}{c|}{158}                                                        & \multicolumn{1}{c|}{382}           & \multicolumn{1}{c|}{246}                                                       & 307             & \multicolumn{1}{c|}{158}                                                        & \multicolumn{1}{c|}{401}           & \multicolumn{1}{c|}{231}                                                       & 343             \\ \hline
\textbf{60}                                                                               & \multicolumn{1}{c|}{161}                                                        & \multicolumn{1}{c|}{396}           & \multicolumn{1}{c|}{246}                                                       & 331             & \multicolumn{1}{c|}{159}                                                        & \multicolumn{1}{c|}{339}           & \multicolumn{1}{c|}{255}                                                       & 271             & \multicolumn{1}{c|}{161}                                                        & \multicolumn{1}{c|}{363}           & \multicolumn{1}{c|}{231}                                                       & 291             \\ \hline
\textbf{80}                                                                               & \multicolumn{1}{c|}{162}                                                        & \multicolumn{1}{c|}{338}           & \multicolumn{1}{c|}{256}                                                       & 265             & \multicolumn{1}{c|}{160}                                                        & \multicolumn{1}{c|}{294}           & \multicolumn{1}{c|}{266}                                                       & 225             & \multicolumn{1}{c|}{163}                                                        & \multicolumn{1}{c|}{300}           & \multicolumn{1}{c|}{234}                                                       & 226             \\ \hline
\textbf{100}                                                                              & \multicolumn{1}{c|}{162}                                                        & \multicolumn{1}{c|}{273}           & \multicolumn{1}{c|}{250}                                                       & 175             & \multicolumn{1}{c|}{160}                                                        & \multicolumn{1}{c|}{272}           & \multicolumn{1}{c|}{250}                                                       & 169             & \multicolumn{1}{c|}{164}                                                        & \multicolumn{1}{c|}{274}           & \multicolumn{1}{c|}{249}                                                       & 174             \\ \hline
\end{tabular}
\end{table}

\newpage
\subsection{Insurance Transaction Family}
\textbf{Experiment 1:}
\begin{table}[]
\begin{tabular}{|c|cccc|cccc|cccc|}
\hline
\multirow{2}{*}{\textbf{\begin{tabular}[c]{@{}c@{}}No of \\ Blocks\end{tabular}}} & \multicolumn{4}{c|}{\textbf{CP1}}                                                                                                                                                                                       & \multicolumn{4}{c|}{\textbf{CP2}}                                                                                                                                                                                       & \multicolumn{4}{c|}{\textbf{CP3}}                                                                                                                                                                                       \\ \cline{2-13} 
                                                                                  & \multicolumn{1}{c|}{\textbf{\begin{tabular}[c]{@{}c@{}}ADJ\\ DAG\end{tabular}}} & \multicolumn{1}{c|}{\textbf{Tree}} & \multicolumn{1}{c|}{\textbf{\begin{tabular}[c]{@{}c@{}}LL\\ DAG\end{tabular}}} & \textbf{Serial} & \multicolumn{1}{c|}{\textbf{\begin{tabular}[c]{@{}c@{}}ADJ\\ DAG\end{tabular}}} & \multicolumn{1}{c|}{\textbf{Tree}} & \multicolumn{1}{c|}{\textbf{\begin{tabular}[c]{@{}c@{}}LL\\ DAG\end{tabular}}} & \textbf{Serial} & \multicolumn{1}{c|}{\textbf{\begin{tabular}[c]{@{}c@{}}ADJ\\ DAG\end{tabular}}} & \multicolumn{1}{c|}{\textbf{Tree}} & \multicolumn{1}{c|}{\textbf{\begin{tabular}[c]{@{}c@{}}LL\\ DAG\end{tabular}}} & \textbf{Serial} \\ \hline
\textbf{10}                                                                       & \multicolumn{1}{c|}{46}                                                         & \multicolumn{1}{c|}{51}            & \multicolumn{1}{c|}{44}                                                        & 154             & \multicolumn{1}{c|}{46}                                                         & \multicolumn{1}{c|}{56}            & \multicolumn{1}{c|}{46}                                                        & 92              & \multicolumn{1}{c|}{45}                                                         & \multicolumn{1}{c|}{55}            & \multicolumn{1}{c|}{44}                                                        & 88              \\ \hline
\textbf{20}                                                                       & \multicolumn{1}{c|}{96}                                                         & \multicolumn{1}{c|}{97}            & \multicolumn{1}{c|}{82}                                                        & 312             & \multicolumn{1}{c|}{93}                                                         & \multicolumn{1}{c|}{111}           & \multicolumn{1}{c|}{95}                                                        & 188             & \multicolumn{1}{c|}{91}                                                         & \multicolumn{1}{c|}{111}           & \multicolumn{1}{c|}{88}                                                        & 175             \\ \hline
\textbf{30}                                                                       & \multicolumn{1}{c|}{143}                                                        & \multicolumn{1}{c|}{155}           & \multicolumn{1}{c|}{137}                                                       & 466             & \multicolumn{1}{c|}{133}                                                        & \multicolumn{1}{c|}{169}           & \multicolumn{1}{c|}{141}                                                       & 275             & \multicolumn{1}{c|}{140}                                                        & \multicolumn{1}{c|}{169}           & \multicolumn{1}{c|}{128}                                                       & 258             \\ \hline
\textbf{40}                                                                       & \multicolumn{1}{c|}{183}                                                        & \multicolumn{1}{c|}{191}           & \multicolumn{1}{c|}{175}                                                       & 635             & \multicolumn{1}{c|}{188}                                                        & \multicolumn{1}{c|}{227}           & \multicolumn{1}{c|}{186}                                                       & 364             & \multicolumn{1}{c|}{177}                                                        & \multicolumn{1}{c|}{227}           & \multicolumn{1}{c|}{174}                                                       & 340             \\ \hline
\textbf{50}                                                                       & \multicolumn{1}{c|}{226}                                                        & \multicolumn{1}{c|}{238}           & \multicolumn{1}{c|}{214}                                                       & 769             & \multicolumn{1}{c|}{234}                                                        & \multicolumn{1}{c|}{278}           & \multicolumn{1}{c|}{236}                                                       & 466             & \multicolumn{1}{c|}{233}                                                        & \multicolumn{1}{c|}{278}           & \multicolumn{1}{c|}{217}                                                       & 438             \\ \hline
\end{tabular}
\end{table}

\noindent
\textbf{Experiment 2:}
\begin{table}[]
\begin{tabular}{|c|cccc|cccc|cccc|}
\hline
\multirow{2}{*}{\textbf{\begin{tabular}[c]{@{}c@{}}Txns\\  per \\ Block\end{tabular}}} & \multicolumn{4}{c|}{\textbf{CP1}}                                                                                                                                                                                       & \multicolumn{4}{c|}{\textbf{CP2}}                                                                                                                                                                                       & \multicolumn{4}{c|}{\textbf{CP3}}                                                                                                                                                                                       \\ \cline{2-13} 
                                                                                       & \multicolumn{1}{c|}{\textbf{\begin{tabular}[c]{@{}c@{}}ADJ\\ DAG\end{tabular}}} & \multicolumn{1}{c|}{\textbf{Tree}} & \multicolumn{1}{c|}{\textbf{\begin{tabular}[c]{@{}c@{}}LL\\ DAG\end{tabular}}} & \textbf{Serial} & \multicolumn{1}{c|}{\textbf{\begin{tabular}[c]{@{}c@{}}ADJ\\ DAG\end{tabular}}} & \multicolumn{1}{c|}{\textbf{Tree}} & \multicolumn{1}{c|}{\textbf{\begin{tabular}[c]{@{}c@{}}LL\\ DAG\end{tabular}}} & \textbf{Serial} & \multicolumn{1}{c|}{\textbf{\begin{tabular}[c]{@{}c@{}}ADJ\\ DAG\end{tabular}}} & \multicolumn{1}{c|}{\textbf{Tree}} & \multicolumn{1}{c|}{\textbf{\begin{tabular}[c]{@{}c@{}}LL\\ DAG\end{tabular}}} & \textbf{Serial} \\ \hline
\textbf{200}                                                                           & \multicolumn{1}{c|}{15}                                                         & \multicolumn{1}{c|}{16}            & \multicolumn{1}{c|}{14}                                                        & 40              & \multicolumn{1}{c|}{14}                                                         & \multicolumn{1}{c|}{20}            & \multicolumn{1}{c|}{13}                                                        & 36              & \multicolumn{1}{c|}{14}                                                         & \multicolumn{1}{c|}{20}            & \multicolumn{1}{c|}{13}                                                        & 35              \\ \hline
\textbf{400}                                                                           & \multicolumn{1}{c|}{30}                                                         & \multicolumn{1}{c|}{32}            & \multicolumn{1}{c|}{32}                                                        & 90              & \multicolumn{1}{c|}{30}                                                         & \multicolumn{1}{c|}{42}            & \multicolumn{1}{c|}{30}                                                        & 70              & \multicolumn{1}{c|}{31}                                                         & \multicolumn{1}{c|}{42}            & \multicolumn{1}{c|}{30}                                                        & 70              \\ \hline
\textbf{600}                                                                           & \multicolumn{1}{c|}{50}                                                         & \multicolumn{1}{c|}{49}            & \multicolumn{1}{c|}{46}                                                        & 148             & \multicolumn{1}{c|}{47}                                                         & \multicolumn{1}{c|}{68}            & \multicolumn{1}{c|}{48}                                                        & 112             & \multicolumn{1}{c|}{50}                                                         & \multicolumn{1}{c|}{68}            & \multicolumn{1}{c|}{48}                                                        & 105             \\ \hline
\textbf{800}                                                                           & \multicolumn{1}{c|}{70}                                                         & \multicolumn{1}{c|}{76}            & \multicolumn{1}{c|}{72}                                                        & 226             & \multicolumn{1}{c|}{70}                                                         & \multicolumn{1}{c|}{90}            & \multicolumn{1}{c|}{70}                                                        & 155             & \multicolumn{1}{c|}{67}                                                         & \multicolumn{1}{c|}{90}            & \multicolumn{1}{c|}{70}                                                        & 151             \\ \hline
\textbf{1000}                                                                          & \multicolumn{1}{c|}{96}                                                         & \multicolumn{1}{c|}{106}           & \multicolumn{1}{c|}{97}                                                        & 309             & \multicolumn{1}{c|}{95}                                                         & \multicolumn{1}{c|}{109}           & \multicolumn{1}{c|}{88}                                                        & 206             & \multicolumn{1}{c|}{90}                                                         & \multicolumn{1}{c|}{109}           & \multicolumn{1}{c|}{88}                                                        & 192             \\ \hline
\textbf{1200}                                                                          & \multicolumn{1}{c|}{118}                                                        & \multicolumn{1}{c|}{122}           & \multicolumn{1}{c|}{122}                                                       & 420             & \multicolumn{1}{c|}{115}                                                        & \multicolumn{1}{c|}{149}           & \multicolumn{1}{c|}{115}                                                       & 256             & \multicolumn{1}{c|}{113}                                                        & \multicolumn{1}{c|}{149}           & \multicolumn{1}{c|}{115}                                                       & 235             \\ \hline
\end{tabular}
\end{table}

\noindent
\textbf{Experiment 3:} 
\begin{table}[]
\begin{tabular}{|c|cccc|cccc|cccc|}
\hline
\multirow{2}{*}{\textbf{\begin{tabular}[c]{@{}c@{}}Dependency\\ percentage\end{tabular}}} & \multicolumn{4}{c|}{\textbf{CP1}}                                                                                                                                                                                       & \multicolumn{4}{c|}{\textbf{CP2}}                                                                                                                                                                                       & \multicolumn{4}{c|}{\textbf{CP3}}                                                                                                                                                                                       \\ \cline{2-13} 
                                                                                          & \multicolumn{1}{c|}{\textbf{\begin{tabular}[c]{@{}c@{}}ADJ\\ DAG\end{tabular}}} & \multicolumn{1}{c|}{\textbf{Tree}} & \multicolumn{1}{c|}{\textbf{\begin{tabular}[c]{@{}c@{}}LL\\ DAG\end{tabular}}} & \textbf{Serial} & \multicolumn{1}{c|}{\textbf{\begin{tabular}[c]{@{}c@{}}ADJ\\ DAG\end{tabular}}} & \multicolumn{1}{c|}{\textbf{Tree}} & \multicolumn{1}{c|}{\textbf{\begin{tabular}[c]{@{}c@{}}LL\\ DAG\end{tabular}}} & \textbf{Serial} & \multicolumn{1}{c|}{\textbf{\begin{tabular}[c]{@{}c@{}}ADJ\\ DAG\end{tabular}}} & \multicolumn{1}{c|}{\textbf{Tree}} & \multicolumn{1}{c|}{\textbf{\begin{tabular}[c]{@{}c@{}}LL\\ DAG\end{tabular}}} & \textbf{Serial} \\ \hline
\textbf{0}                                                                                & \multicolumn{1}{c|}{93}                                                         & \multicolumn{1}{c|}{87}            & \multicolumn{1}{c|}{82}                                                        & 398             & \multicolumn{1}{c|}{91}                                                         & \multicolumn{1}{c|}{99}            & \multicolumn{1}{c|}{84}                                                        & 404             & \multicolumn{1}{c|}{89}                                                         & \multicolumn{1}{c|}{90}            & \multicolumn{1}{c|}{81}                                                        & 164             \\ \hline
\textbf{20}                                                                               & \multicolumn{1}{c|}{94}                                                         & \multicolumn{1}{c|}{95}            & \multicolumn{1}{c|}{83}                                                        & 367             & \multicolumn{1}{c|}{92}                                                         & \multicolumn{1}{c|}{108}           & \multicolumn{1}{c|}{89}                                                        & 247             & \multicolumn{1}{c|}{94}                                                         & \multicolumn{1}{c|}{93}            & \multicolumn{1}{c|}{84}                                                        & 250             \\ \hline
\textbf{40}                                                                               & \multicolumn{1}{c|}{96}                                                         & \multicolumn{1}{c|}{98}            & \multicolumn{1}{c|}{86}                                                        & 329             & \multicolumn{1}{c|}{93}                                                         & \multicolumn{1}{c|}{112}           & \multicolumn{1}{c|}{94}                                                        & 213             & \multicolumn{1}{c|}{96}                                                         & \multicolumn{1}{c|}{97}            & \multicolumn{1}{c|}{90}                                                        & 301             \\ \hline
\textbf{60}                                                                               & \multicolumn{1}{c|}{98}                                                         & \multicolumn{1}{c|}{101}           & \multicolumn{1}{c|}{87}                                                        & 295             & \multicolumn{1}{c|}{94}                                                         & \multicolumn{1}{c|}{114}           & \multicolumn{1}{c|}{95}                                                        & 194             & \multicolumn{1}{c|}{98}                                                         & \multicolumn{1}{c|}{99}            & \multicolumn{1}{c|}{93}                                                        & 340             \\ \hline
\textbf{80}                                                                               & \multicolumn{1}{c|}{100}                                                        & \multicolumn{1}{c|}{102}           & \multicolumn{1}{c|}{91}                                                        & 244             & \multicolumn{1}{c|}{97}                                                         & \multicolumn{1}{c|}{132}           & \multicolumn{1}{c|}{98}                                                        & 191             & \multicolumn{1}{c|}{102}                                                        & \multicolumn{1}{c|}{105}           & \multicolumn{1}{c|}{96}                                                        & 386             \\ \hline
\textbf{100}                                                                              & \multicolumn{1}{c|}{103}                                                        & \multicolumn{1}{c|}{106}           & \multicolumn{1}{c|}{93}                                                        & 205             & \multicolumn{1}{c|}{103}                                                        & \multicolumn{1}{c|}{148}           & \multicolumn{1}{c|}{102}                                                       & 175             & \multicolumn{1}{c|}{105}                                                        & \multicolumn{1}{c|}{117}           & \multicolumn{1}{c|}{100}                                                       & 421             \\ \hline
\end{tabular}
\end{table}

\newpage
\subsection{Mixed Transaction Family}
\textbf{Experiment 1:}
\begin{table}[]
\begin{tabular}{|c|cccc|cccc|cccc|}
\hline
\multirow{2}{*}{\textbf{\begin{tabular}[c]{@{}c@{}}No of \\ Blocks\end{tabular}}} & \multicolumn{4}{c|}{\textbf{CP1}}                                                                                                                                                                                       & \multicolumn{4}{c|}{\textbf{CP2}}                                                                                                                                                                                       & \multicolumn{4}{c|}{\textbf{CP3}}                                                                                                                                                                                       \\ \cline{2-13} 
                                                                                  & \multicolumn{1}{c|}{\textbf{\begin{tabular}[c]{@{}c@{}}ADJ\\ DAG\end{tabular}}} & \multicolumn{1}{c|}{\textbf{Tree}} & \multicolumn{1}{c|}{\textbf{\begin{tabular}[c]{@{}c@{}}LL\\ DAG\end{tabular}}} & \textbf{Serial} & \multicolumn{1}{c|}{\textbf{\begin{tabular}[c]{@{}c@{}}ADJ\\ DAG\end{tabular}}} & \multicolumn{1}{c|}{\textbf{Tree}} & \multicolumn{1}{c|}{\textbf{\begin{tabular}[c]{@{}c@{}}LL\\ DAG\end{tabular}}} & \textbf{Serial} & \multicolumn{1}{c|}{\textbf{\begin{tabular}[c]{@{}c@{}}ADJ\\ DAG\end{tabular}}} & \multicolumn{1}{c|}{\textbf{Tree}} & \multicolumn{1}{c|}{\textbf{\begin{tabular}[c]{@{}c@{}}LL\\ DAG\end{tabular}}} & \textbf{Serial} \\ \hline
\textbf{10}                                                                       & \multicolumn{1}{c|}{118}                                                        & \multicolumn{1}{c|}{150}           & \multicolumn{1}{c|}{129}                                                       & 203             & \multicolumn{1}{c|}{136}                                                        & \multicolumn{1}{c|}{171}           & \multicolumn{1}{c|}{151}                                                       & 178             & \multicolumn{1}{c|}{134}                                                        & \multicolumn{1}{c|}{180}           & \multicolumn{1}{c|}{148}                                                       & 209             \\ \hline
\textbf{20}                                                                       & \multicolumn{1}{c|}{244}                                                        & \multicolumn{1}{c|}{308}           & \multicolumn{1}{c|}{269}                                                       & 400             & \multicolumn{1}{c|}{268}                                                        & \multicolumn{1}{c|}{347}           & \multicolumn{1}{c|}{300}                                                       & 355             & \multicolumn{1}{c|}{271}                                                        & \multicolumn{1}{c|}{356}           & \multicolumn{1}{c|}{293}                                                       & 433             \\ \hline
\textbf{30}                                                                       & \multicolumn{1}{c|}{356}                                                        & \multicolumn{1}{c|}{448}           & \multicolumn{1}{c|}{396}                                                       & 688             & \multicolumn{1}{c|}{396}                                                        & \multicolumn{1}{c|}{520}           & \multicolumn{1}{c|}{450}                                                       & 532             & \multicolumn{1}{c|}{406}                                                        & \multicolumn{1}{c|}{531}           & \multicolumn{1}{c|}{440}                                                       & 668             \\ \hline
\textbf{40}                                                                       & \multicolumn{1}{c|}{486}                                                        & \multicolumn{1}{c|}{604}           & \multicolumn{1}{c|}{531}                                                       & 879             & \multicolumn{1}{c|}{540}                                                        & \multicolumn{1}{c|}{682}           & \multicolumn{1}{c|}{603}                                                       & 727             & \multicolumn{1}{c|}{536}                                                        & \multicolumn{1}{c|}{723}           & \multicolumn{1}{c|}{585}                                                       & 893             \\ \hline
\textbf{50}                                                                       & \multicolumn{1}{c|}{594}                                                        & \multicolumn{1}{c|}{774}           & \multicolumn{1}{c|}{663}                                                       & 1094            & \multicolumn{1}{c|}{665}                                                        & \multicolumn{1}{c|}{876}           & \multicolumn{1}{c|}{751}                                                       & 893             & \multicolumn{1}{c|}{667}                                                        & \multicolumn{1}{c|}{891}           & \multicolumn{1}{c|}{736}                                                       & 1093            \\ \hline
\end{tabular}
\end{table}

\noindent
\textbf{Experiment 2:}
\begin{table}[]
\begin{tabular}{|c|cccc|cccc|cccc|}
\hline
\multirow{2}{*}{\textbf{\begin{tabular}[c]{@{}c@{}}Txns\\  per \\ Block\end{tabular}}} & \multicolumn{4}{c|}{\textbf{CP1}}                                                                                                                                                                                       & \multicolumn{4}{c|}{\textbf{CP2}}                                                                                                                                                                                       & \multicolumn{4}{c|}{\textbf{CP3}}                                                                                                                                                                                       \\ \cline{2-13} 
                                                                                       & \multicolumn{1}{c|}{\textbf{\begin{tabular}[c]{@{}c@{}}ADJ\\ DAG\end{tabular}}} & \multicolumn{1}{c|}{\textbf{Tree}} & \multicolumn{1}{c|}{\textbf{\begin{tabular}[c]{@{}c@{}}LL\\ DAG\end{tabular}}} & \textbf{Serial} & \multicolumn{1}{c|}{\textbf{\begin{tabular}[c]{@{}c@{}}ADJ\\ DAG\end{tabular}}} & \multicolumn{1}{c|}{\textbf{Tree}} & \multicolumn{1}{c|}{\textbf{\begin{tabular}[c]{@{}c@{}}LL\\ DAG\end{tabular}}} & \textbf{Serial} & \multicolumn{1}{c|}{\textbf{\begin{tabular}[c]{@{}c@{}}ADJ\\ DAG\end{tabular}}} & \multicolumn{1}{c|}{\textbf{Tree}} & \multicolumn{1}{c|}{\textbf{\begin{tabular}[c]{@{}c@{}}LL\\ DAG\end{tabular}}} & \textbf{Serial} \\ \hline
\textbf{200}                                                                           & \multicolumn{1}{c|}{44}                                                         & \multicolumn{1}{c|}{50}            & \multicolumn{1}{c|}{41}                                                        & 95              & \multicolumn{1}{c|}{49}                                                         & \multicolumn{1}{c|}{56}            & \multicolumn{1}{c|}{44}                                                        & 58              & \multicolumn{1}{c|}{50}                                                         & \multicolumn{1}{c|}{56}            & \multicolumn{1}{c|}{48}                                                        & 60              \\ \hline
\textbf{400}                                                                           & \multicolumn{1}{c|}{92}                                                         & \multicolumn{1}{c|}{107}           & \multicolumn{1}{c|}{80}                                                        & 143             & \multicolumn{1}{c|}{101}                                                        & \multicolumn{1}{c|}{120}           & \multicolumn{1}{c|}{94}                                                        & 125             & \multicolumn{1}{c|}{103}                                                        & \multicolumn{1}{c|}{120}           & \multicolumn{1}{c|}{100}                                                       & 124             \\ \hline
\textbf{600}                                                                           & \multicolumn{1}{c|}{136}                                                        & \multicolumn{1}{c|}{165}           & \multicolumn{1}{c|}{126}                                                       & 208             & \multicolumn{1}{c|}{156}                                                        & \multicolumn{1}{c|}{187}           & \multicolumn{1}{c|}{146}                                                       & 203             & \multicolumn{1}{c|}{154}                                                        & \multicolumn{1}{c|}{191}           & \multicolumn{1}{c|}{145}                                                       & 207             \\ \hline
\textbf{800}                                                                           & \multicolumn{1}{c|}{177}                                                        & \multicolumn{1}{c|}{233}           & \multicolumn{1}{c|}{177}                                                       & 259             & \multicolumn{1}{c|}{204}                                                        & \multicolumn{1}{c|}{262}           & \multicolumn{1}{c|}{204}                                                       & 265             & \multicolumn{1}{c|}{205}                                                        & \multicolumn{1}{c|}{268}           & \multicolumn{1}{c|}{205}                                                       & 269             \\ \hline
\textbf{1000}                                                                          & \multicolumn{1}{c|}{234}                                                        & \multicolumn{1}{c|}{307}           & \multicolumn{1}{c|}{235}                                                       & 349             & \multicolumn{1}{c|}{267}                                                        & \multicolumn{1}{c|}{341}           & \multicolumn{1}{c|}{276}                                                       & 350             & \multicolumn{1}{c|}{272}                                                        & \multicolumn{1}{c|}{353}           & \multicolumn{1}{c|}{279}                                                       & 354             \\ \hline
\textbf{1200}                                                                          & \multicolumn{1}{c|}{288}                                                        & \multicolumn{1}{c|}{380}           & \multicolumn{1}{c|}{320}                                                       & 490             & \multicolumn{1}{c|}{332}                                                        & \multicolumn{1}{c|}{441}           & \multicolumn{1}{c|}{384}                                                       & 441             & \multicolumn{1}{c|}{316}                                                        & \multicolumn{1}{c|}{439}           & \multicolumn{1}{c|}{355}                                                       & 434             \\ \hline
\end{tabular}
\end{table}

\noindent
\textbf{Experiment 3:} 
\begin{table}[]
\begin{tabular}{|c|cccc|cccc|cccc|}
\hline
\multirow{2}{*}{\textbf{\begin{tabular}[c]{@{}c@{}}Dependency\\ percentage\end{tabular}}} & \multicolumn{4}{c|}{\textbf{CP1}}                                                                                                                                                                                       & \multicolumn{4}{c|}{\textbf{CP2}}                                                                                                                                                                                       & \multicolumn{4}{c|}{\textbf{CP3}}                                                                                                                                                                                       \\ \cline{2-13} 
                                                                                          & \multicolumn{1}{c|}{\textbf{\begin{tabular}[c]{@{}c@{}}ADJ\\ DAG\end{tabular}}} & \multicolumn{1}{c|}{\textbf{Tree}} & \multicolumn{1}{c|}{\textbf{\begin{tabular}[c]{@{}c@{}}LL\\ DAG\end{tabular}}} & \textbf{Serial} & \multicolumn{1}{c|}{\textbf{\begin{tabular}[c]{@{}c@{}}ADJ\\ DAG\end{tabular}}} & \multicolumn{1}{c|}{\textbf{Tree}} & \multicolumn{1}{c|}{\textbf{\begin{tabular}[c]{@{}c@{}}LL\\ DAG\end{tabular}}} & \textbf{Serial} & \multicolumn{1}{c|}{\textbf{\begin{tabular}[c]{@{}c@{}}ADJ\\ DAG\end{tabular}}} & \multicolumn{1}{c|}{\textbf{Tree}} & \multicolumn{1}{c|}{\textbf{\begin{tabular}[c]{@{}c@{}}LL\\ DAG\end{tabular}}} & \textbf{Serial} \\ \hline
\textbf{0}                                                                                & \multicolumn{1}{c|}{281}                                                        & \multicolumn{1}{c|}{361}           & \multicolumn{1}{c|}{290}                                                       & 587             & \multicolumn{1}{c|}{277}                                                        & \multicolumn{1}{c|}{348}           & \multicolumn{1}{c|}{286}                                                       & 614             & \multicolumn{1}{c|}{277}                                                        & \multicolumn{1}{c|}{355}           & \multicolumn{1}{c|}{288}                                                       & 551             \\ \hline
\textbf{20}                                                                               & \multicolumn{1}{c|}{273}                                                        & \multicolumn{1}{c|}{371}           & \multicolumn{1}{c|}{284}                                                       & 602             & \multicolumn{1}{c|}{278}                                                        & \multicolumn{1}{c|}{388}           & \multicolumn{1}{c|}{296}                                                       & 563             & \multicolumn{1}{c|}{274}                                                        & \multicolumn{1}{c|}{411}           & \multicolumn{1}{c|}{276}                                                       & 542             \\ \hline
\textbf{40}                                                                               & \multicolumn{1}{c|}{280}                                                        & \multicolumn{1}{c|}{397}           & \multicolumn{1}{c|}{290}                                                       & 557             & \multicolumn{1}{c|}{277}                                                        & \multicolumn{1}{c|}{403}           & \multicolumn{1}{c|}{294}                                                       & 535             & \multicolumn{1}{c|}{277}                                                        & \multicolumn{1}{c|}{411}           & \multicolumn{1}{c|}{277}                                                       & 554             \\ \hline
\textbf{60}                                                                               & \multicolumn{1}{c|}{271}                                                        & \multicolumn{1}{c|}{412}           & \multicolumn{1}{c|}{296}                                                       & 550             & \multicolumn{1}{c|}{278}                                                        & \multicolumn{1}{c|}{407}           & \multicolumn{1}{c|}{294}                                                       & 516             & \multicolumn{1}{c|}{280}                                                        & \multicolumn{1}{c|}{404}           & \multicolumn{1}{c|}{280}                                                       & 542             \\ \hline
\textbf{80}                                                                               & \multicolumn{1}{c|}{275}                                                        & \multicolumn{1}{c|}{437}           & \multicolumn{1}{c|}{298}                                                       & 533             & \multicolumn{1}{c|}{273}                                                        & \multicolumn{1}{c|}{414}           & \multicolumn{1}{c|}{296}                                                       & 512             & \multicolumn{1}{c|}{276}                                                        & \multicolumn{1}{c|}{399}           & \multicolumn{1}{c|}{283}                                                       & 530             \\ \hline
\textbf{100}                                                                              & \multicolumn{1}{c|}{274}                                                        & \multicolumn{1}{c|}{444}           & \multicolumn{1}{c|}{293}                                                       & 476             & \multicolumn{1}{c|}{275}                                                        & \multicolumn{1}{c|}{423}           & \multicolumn{1}{c|}{315}                                                       & 468             & \multicolumn{1}{c|}{278}                                                        & \multicolumn{1}{c|}{464}           & \multicolumn{1}{c|}{320}                                                       & 533             \\ \hline
\end{tabular}
\end{table}

\newpage
\subsection{Smart Validation: Mixed Transaction Family}
\textbf{Experiment 1:}

\begin{table}[]
\begin{tabular}{|c|c|c|c|c|}
\hline
\textbf{No of Blocks} & \textbf{ADJ DAG} & \textbf{Tree} & \textbf{LL DAG} & \textbf{\begin{tabular}[c]{@{}c@{}}Smart\\ validation\end{tabular}} \\ \hline
\textbf{10}           & 0.80             & 0.27          & 0.37            & 0.10                                                                \\ \hline
\textbf{20}           & 1.62             & 0.59          & 0.65            & 0.31                                                                \\ \hline
\textbf{30}           & 2.31             & 0.79          & 1.00            & 0.42                                                                \\ \hline
\textbf{40}           & 2.37             & 0.86          & 1.50            & 0.65                                                                \\ \hline
\textbf{50}           & 2.95             & 1.25          & 1.70            & 0.65                                                                \\ \hline
\end{tabular}
\end{table}

\noindent
\textbf{Experiment 2:}
\begin{table}[]
\begin{tabular}{|c|c|c|c|c|}
\hline
\textbf{Txn per Block} & \textbf{ADJ DAG} & \textbf{Tree} & \textbf{LL DAG} & \textbf{\begin{tabular}[c]{@{}c@{}}Smart\\ validation\end{tabular}} \\ \hline
\textbf{200}           & 0.32             & 0.16          & 0.16            & 0.10                                                                \\ \hline
\textbf{400}           & 0.64             & 0.23          & 0.30            & 0.18                                                                \\ \hline
\textbf{600}           & 0.97             & 0.30          & 0.44            & 0.20                                                                \\ \hline
\textbf{800}           & 1.27             & 0.39          & 0.59            & 0.31                                                                \\ \hline
\textbf{1000}          & 1.60             & 0.50          & 0.76            & 0.29                                                                \\ \hline
\textbf{1200}          & 1.94             & 0.51          & 0.76            & 0.45                                                                \\ \hline
\end{tabular}
\end{table}

\noindent
\textbf{Experiment 3:}
\begin{table}[]
\begin{tabular}{|c|c|c|c|c|}
\hline
\textbf{\begin{tabular}[c]{@{}c@{}}Dependency\\ Percentage\end{tabular}} & \textbf{ADJ DAG} & \textbf{Tree} & \textbf{LL DAG} & \textbf{\begin{tabular}[c]{@{}c@{}}Smart\\ validation\end{tabular}} \\ \hline
\textbf{0}                                                               & 0.08             & 0.02          & 0.03            & 0.02                                                                \\ \hline
\textbf{20}                                                              & 0.07             & 0.02          & 0.03            & 0.02                                                                \\ \hline
\textbf{40}                                                              & 0.06             & 0.02          & 0.04            & 0.02                                                                \\ \hline
\textbf{60}                                                              & 0.06             & 0.02          & 0.04            & 0.02                                                                \\ \hline
\textbf{80}                                                              & 0.06             & 0.02          & 0.04            & 0.02                                                                \\ \hline
\textbf{100}                                                             & 0.06             & 0.03          & 0.05            & 0.02                                                                \\ \hline
\end{tabular}
\end{table}

\end{document}